\newtheorem{theorem}{Theorem}
\newtheorem{example}{Example}
\newtheorem{proposition}{Proposition}
\newtheorem{lemma}{Lemma}
\newtheorem{corollary}{Corollary}
\newtheorem{remark}{Remark}
\def\timenow{\@tempcnta\time
  \@tempcntb\@tempcnta
  \divide\@tempcntb60
  \ifnum10>\@tempcntb0\fi\number\@tempcntb
  \multiply\@tempcntb60
  \advance\@tempcnta-\@tempcntb
  :\ifnum10>\@tempcnta0\fi\number\@tempcnta}
\newcommand{\eat}[1]{}
\renewcommand{\epsilon}{\varepsilon}
\renewcommand{\phi}{\varphi}
\newcommand{\defin}[1]{\emph{\textbf{#1}}}
\newcommand{\distribution}[1]{\mathcal{D}(#1)}
\newcommand{\arena}{\mathcal{A}}
\newcommand{\game}{\mathbb{G}}
\newcommand{\Eve}{Eve\xspace}
\newcommand{\Adam}{Adam\xspace}
\newcommand{\Evei}{E}
\newcommand{\Adami}{A}
\newcommand{\Ei}{E}
\newcommand{\Ai}{A}
\newcommand{\states}{S}
\newcommand{\actions}{\Sigma}
\newcommand{\action}{\sigma}
\newcommand{\ftrans}{\delta}
\newcommand{\requiv}{\sim}
\newcommand{\rdoubleequiv}{\approx}
\newcommand{\any}{*}
\newcommand{\outcomes}[3]{\mathrm{Outcomes}(#1,#2,#3)}
\newcommand{\cone}[1]{\mathrm{cone}(#1)}
\newcommand{\proba}[3]{\mathrm{Pr}_{#1}^{#2,#3}}
\newcommand{\objective}{\mathcal{O}}
\newcommand{\fstates}{F}
\newcommand{\updateB}[1]{\mathrm{UpBelief}_{\!#1}}
\newcommand{\belief}[2]{\mathrm{Belief}_{#2}^{#1}}
\newcommand{\wbelief}[1]{\mathcal{B}^{{AS}}_{#1}}
\newcommand{\play}{\lambda}
\newcommand{\exptime}{\textsc{ExpTime}\xspace}
\newcommand{\BeliefOperator}{\Xi}
\newcommand{\win}{q_f}
\newcommand{\wait}{q_w}
\newcommand{\head}{0}
\newcommand{\tail}{1}
\newcommand{\barre}{\!\mid\!}
\newcommand{\resp}{{resp.}\xspace}
\newcommand{\Reach}{\mathrm{Reach}}
\newcommand{\ReachN}{\Reach^{\leq n}}
\newcommand{\ReachNP}{\Reach^{\leq n-1}}
\newcommand{\ie}{\emph{i.e.} }
\newcommand{\deltaT}{\delta_{\mathcal{T}}}
\newcommand{\deltaTp}{\delta_{\mathcal{T}'}}
\newcommand{\act}{Act}
\newcommand{\actT}{{\act_{\mathcal{T}}}}
\newcommand{\strat}{\phi}
\newcommand{\trans}{\mathcal{T}}
\newcommand\os[1]{}
\newcommand\ac[1]{}
\newcommand\cl[1]{}
\newcommand\review[1]{}
\newcommand\oschanged[1]{#1}
\newcommand\vlong[1]{}
\newcommand\reviewP[2]{}
\newcommand\reviewW[2]{}
\newcommand\reviewF[2]{}
\newcommand\answer[1]{}
\newcommand{\remove}[1]{}
\begin{document}

%\setcounter{page}{1001}
%\issue{XXI~(2001)}

%%%%%%%%

% Author macros::begin %%%%%%%%%%%%%%%%%%%%%%%%%%%%%%%%%%%%%%%%%%%%%%%%
\title{Pure Strategies in Imperfect Information Stochastic Games}

\author[1]{Arnaud Carayol\thanks{{Arnaud.Carayol@univ-mlv.fr}}}
\author[2]{Christof L\"oding\thanks{{loeding{@}informatik.rwth-aachen.de}}}
\author[3]{Olivier Serre\thanks{{Olivier.Serre@cnrs.fr}}}
\affil[1]{LIGM (CNRS \& Université Paris Est)}
\affil[2]{Informatik 7,  RWTH Aachen, Germany}
\affil[3]{IRIF (CNRS \& Université Paris Diderot -- Paris 7)}
\date{}

\maketitle

%\showversion
%++++++++++++++++++++++++++++++++++++++%
%+++++++++++++++ ABSTRACT ++++++++++++++%
%++++++++++++++++++++++++++++++++++++++%

\begin{abstract}
We consider imperfect information stochastic games where we require the players to use pure (\emph{i.e.} non randomised) strategies. We consider reachability, safety, B\"uchi and co-B\"uchi objectives, and investigate the existence of almost-sure/positively winning strategies for the first player when the second player is perfectly informed or more informed than the first player. We obtain decidability results for positive reachability and almost-sure B\"uchi with optimal algorithms to decide existence of a pure winning strategy and to compute one if it exists. We complete the picture by showing that positive safety is undecidable when restricting to pure strategies even if the second player is perfectly informed.
\end{abstract}

%++++++++++++++++++++++++++++++++++++++%
%++++++++++++ INTRODUCTION+++++++++++%
%++++++++++++++++++++++++++++++++++++++%

%\begin{keywords}
%Two-player imperfect information stochastic games, pure strategies, decidability/undecidability
%\end{keywords}

\section{Introduction}

The study of two-player games has received a lot of attention in the last decade, mainly motivated by applications to the verification of reactive open systems. Those systems are composed of a program (represented by the first player, \Eve) and some (possibly hostile) environment (represented by the second player, \Adam). The verification problem consists in deciding whether the program can be restricted so that the system meets some given specification \emph{whatever} the environment does. Here, restricting the program means synthesizing a controller~\cite{RW87}, which, in terms of games, is equivalent to designing a strategy for \Eve that is winning against \emph{any} strategy of \Adam.

Of course, the class of games to consider depends on the class of systems that one intends to model. This may lead to consider various features such as concurrency (the players \emph{independently} and \emph{simultaneously} choose their action, whose \emph{parallel} execution determines the next state), stochastic transitions (the next state is chosen according to a probability distribution depending on the current state and on the actions chosen by the players) or imperfect information (the players do not observe the exact state). Note that imperfect information is necessary if one wants for instance to model a system where the program and the environment share some public variables while having also their own private variables~\cite{Reif84}.

Recently in~\cite{GS09,BGG09,BertrandGG17} two (mainly equivalent) models of concurrent stochastic games with imperfect information have been introduced. They permit to capture several known models (as those from~\cite{dAH00,chatterjeePHD,CDHR07}) while preserving the main decidability results. 

In this paper we consider the games as introduced in~\cite{GS09,BGG09,BertrandGG17} (we use the formalism of~\cite{GS09}). These are finite state games in which, at each round,
the two players choose concurrently an action and based on these
actions the successor state is chosen according to some fixed
probability distribution. {The resulting} infinite play is won by \Eve
if it satisfies a given \emph{objective}. The objectives we consider
here are reachability {(\ie a final state is eventually visited)},
safety {(\ie no forbidden state is visited)}, B\"uchi {(\ie some
final state is visited infinitely often)} and co-B\"uchi {(\ie no forbidden states is visited infinitely often)}.  Imperfect information is
modelled as follows: both players have an equivalence relation over
states and, instead of observing the exact state, they only observe
its equivalence class. {Intuitively, two equivalent states are 
indistinguishable by the corresponding player.}

In~\cite{GS09,BGG09,BertrandGG17} the authors were considering general strategies
where a player is allowed to use \emph{randomisation} when choosing
her/his next action. It was then shown, for B\"uchi objectives, that
one can decide whether \Eve has such a strategy $\phi$ that is
almost-surely winning against any strategy $\psi$ of \Adam (meaning
that an infinite play played according to $\phi$ and $\psi$ {is won
  by \Eve} with probability $1$). It was also established in
\cite{BGG09,BertrandGG17} that one can decide for co-B\"uchi objectives whether
\Eve has a positively winning strategy.

In the present work we restrict our attention to \emph{pure}
strategies, \emph{i.e.} we forbid the players to randomise when
choosing their actions.  
Our initial motivation for this work comes from automata theory. The
emptiness problem for automata on infinite trees can be described as
the problem of computing a winning strategy in a two-player game of
infinite duration. The required game model depends on the class of
automata that is considered.  In particular,~\cite{FPS13} proposes a
reduction of the emptiness problem for \emph{alternating} tree
automata to the existence of a pure winning strategy for \Eve in an
imperfect information game. For capturing the automaton model with a
qualitative acceptance condition as introduced in~\cite{CarayolHS14},
one furthermore needs stochastic games (and up to now this is the only
known method for checking emptiness of such automata). So one of our
aims is to obtain a toolbox and to understand the limits of this
method for checking emptiness of tree automata.

\oschanged{Another motivation for studying pure strategies comes from controller synthesis. Indeed, a classical way to consider an open system (\ie a controllable program interacting with an uncontrollable environment) is as a two-player game, and in this setting synthesising a controller for the program boils down to compute a winning strategy in the game for the player standing for the program. In the setting of open systems, imperfect information naturally arises, for instance when the program and its environment use private variables. A desirable property of a controller (in addition to its optimality) is its implementability which could be limited by its size but also by the required features. In the imperfect information setting, the main needed feature might be randomisation which is well-known to be non-trivial to implement in a non-biased fashion: hence, existence of an optimal non-randomised controller (equivalently a pure strategy) is a natural question.}

\reviewF{1}{The main drawback is that the motivation is not well developed. It is said in the introduction that the initial
motivation came from automata theory - from automata on infinite trees - in particular [8] has a reduction of the
emptiness problem for alternating tree automata to the question of whether Eve has a pure strategy in an imperfect
information concurrent stochastic game. But one can see that in [8] the game to which the emptiness problem is reduced
is also positional so the results of [3,4] suffice. Therefore it would be very beneficial to the quality of the paper if
the authors develop more where the motivation to consider these decision problems under pure strategies came from,
especially since compared to the restriction of finite-memory strategies, pure strategies seem less useful. More
precisely, does there exist an interesting automata model whose emptiness can be reduced to the question of whether Eve
has a /pure/ winning strategy in such a game that is not also positional? 
}
\answer{We tried to address this in the following way:
\begin{itemize}
	\item We added an extra motivation from controller synthesis.
	\item The reviewer suggests that positionality implies that the results of [3,4] suffice. We do not get the point here: it could be the case that in an imperfect information game \Eve has an almost surely winning \emph{randomised} (even positional) strategy while not having an almost surely winning \emph{pure} strategy (with arbitrary memory). This is the case for instance in the game in Example~\ref{example:noDeterminacy}. 
	\item Concerning the question "does there exist an interesting automata model whose emptiness can be reduced to the question of whether Eve
has a /pure/ winning strategy in such a game that is not also positional?" we believe it is an interesting one but we do not have an answer (but actually we do not know an answer neither in the classical setting of perfect information games that corresponds to usual automata for $\omega$-regular tree languages). 
\end{itemize}
}

Our main results are the following.  On the negative side, by a
reduction of the value $1$ problem for probabilistic word automata
\cite{GO10}, we prove that even if \Adam is fully informed and \Eve is
totally blind (\emph{i.e.} all states are indistinguishable for her),
it is undecidable whether \Eve can positively win a safety game
(Section~\ref{sec:undecidability}).
Under the same restrictions, positive
 winning in B\"uchi games and almost-sure winning in co-B\"uchi games
 are proved to be undecidable by reduction from the emptiness problem
 for probabilistic $\omega$-word automata~\cite{BBG08}.

To obtain positive results, we have to impose restrictions on how \Adam is informed. We consider the case where he has perfect information and the case where he is more informed than \Eve\footnote{{We say that \Adam is more informed than \Eve when his equivalence relation on the states of the games refines that of \Eve. In particular, this is the case when \Adam is perfectly informed.}}. In both situations we show that it is decidable whether \Eve has a positively winning pure strategy in a reachability game (Section~\ref{sec:positive-reachability}). Using this
 result in a fixpoint computation, we prove that one can decide whether
 \Eve has an almost-surely winning pure strategy in a B\"uchi
 game (Section~\ref{sec:as}). Moreover, if it exists, such a strategy can be constructed and
 requires \emph{finite} memory. In both cases, we obtain matching upper and lower complexity bounds.

The decidability results for the special case where \Adam is perfectly informed were also obtained in~\cite{CD}. However, the technique we develop here is different and in particular uses the positive winning case as a toolbox, which later permits us to handle the more general case where \Adam is more informed than \Eve. And while~\cite{CD} focuses on reachability conditions and studies the memory required for winning strategies depending on how the players are informed, we focus on the case in which \Adam is better informed than \Eve (or even perfectly informed), and study different winning conditions. 

\reviewF{1}{Introduction: compare the results under general strategies vs. pure strategies}\answer{We addressed that in the next paragraph}

\oschanged{In our setting (restricting to pure strategies) the algorithmic complexity increases to 2-\exptime while it is only \exptime in the general setting (allowing randomised strategies) when \Adam is more informed than \Eve. Also note that in the latter setting one has decidability results in the case where no assumption is made on how \Adam and \Eve are informed while in our setting this question is left open.}

The resulting complete picture is summarised in the table at the end of this paper.

%++++++++++++++++++++++++++++++++++++++%
%+++++++++++++ DEFINITIONS ++++++++++++++%
%++++++++++++++++++++++++++++++++++++++%

\section{Definitions}

A \defin{probability distribution} over a finite set $X$ is a mapping $d:X\rightarrow [0,1]$ such that $%
{\sum_{x\in X}d(x)=1}$. In the sequel we denote by $\distribution{X}$ the set of probability distributions over $X$.
Given some set $X$ and some equivalence relation $\sim$ over $X$, $[x]_{\sim}$ stands for the equivalence class of $x$ for $\sim$ and $X/_{\sim}=\{[x]_{\sim}\mid x\in X\}$ denotes the set of equivalence classes of $\sim$.
As usual we write $A^*$ (\resp $A^\omega$) for the set of finite (\resp infinite) words over some finite alphabet $A$. For $k\geq 0$ we denote by $A^{\geq k}$ (\resp $A^{\leq k}$) the set of words of length at least (\resp at most) $k$.

{
A \defin{concurrent arena with imperfect information} (or simply an \defin{arena}) is defined as a tuple \linebreak $\arena = (\states,\actions_\Evei,\actions_\Adami,\ftrans,\requiv_\Ei,\requiv_\Ai)$ where  $\states$ is a finite set of \defin{states};  $\actions_\Evei$ (\resp $\actions_\Adami$) is the (finite) set of \defin{actions} for \Eve (\resp \Adam);  $\ftrans: \states\times\actions_\Evei\times \actions_\Adami \rightarrow \distribution{\states}$ is the (total) transition function; and $\requiv_\Ei$ and $\requiv_\Ai$ are equivalence {relations} over states.}

A play in such an arena proceeds as follows. First it starts in some
initial state $s$. Then the first player, \Eve, picks an action
$\action_\Evei\in\actions_\Evei$ and, \emph{simultaneously} and
\emph{independently}, the second player, \Adam, chooses an action
$\action_\Adami\in\actions_\Adami$. Then a successor state is chosen
according to the probability distribution
$\ftrans(s,\action_\Evei,\action_\Adami)$, and the process restarts:
the players choose a new pair of actions that induces, together with
the current state, a new state and so on forever. Hence, a
\defin{play} is an infinite sequence $s_0(\action_\Ei^0,\action_\Ai^0)s_1(\action_\Ei^1,\action_\Ai^1)s_2\cdots$ in
$(\states\cdot(\actions_\Ei\times\actions_\Ai))^\omega$ such that for every $i\geq 0$, 
${\ftrans(s_i,\action_{\Evei}^{i},\action_{\Adami}^{i})(s_{i+1})>0}$. In the
sequel we refer to a prefix of a play ending by a state as a \defin{partial play}.

The intuitive meaning of $\requiv_\Ei$ (\resp $\requiv_\Ai$) is that two states $s_1$ and $s_2$ such that $s_1\requiv_{\Ei} s_2$ (\resp $s_1\requiv_{\Ai} s_2$) cannot be distinguished by \Eve (\resp by \Adam). We easily extend relation $\requiv_X$, with $X\in\{\Ei,\Ai\}$, to partial plays as follows. First, for any partial play $\play=s_0(\action_\Ei^0,\action_\Ai^0)s_1(\action_\Ei^1,\action_\Ai^1)\cdots s_k$ denote $[\play]_{\requiv_X}=[s_0]_{\requiv_X}[s_1]_{\requiv_X}\cdots [s_k]_{\requiv_X}$; then define $\play\requiv_X\play'$ if and only if $[\play]_{\requiv_X}=[\play']_{\requiv_X}$ .

{We say that \Adam is \defin{more informed} than \Eve if $\requiv_\Ai \subseteq \requiv_\Ei$, and \Adam is \defin{perfectly informed} if $\requiv_\Ai$ is the equality relation.}

{
\begin{example}
\label{ex:debut}
Consider the concurrent game with imperfect information depicted in Figure~\ref{fig:exemple-first}. 
 Let $\actions_\Evei = \actions_\Adami = \{a,b\}$. The initial state is $s_0$ and
 from $s_0$ if \Adam plays the action $a$  then any action played by \Eve leads with probability $\frac{1}{2}$ either to $s_1$ or to $s_2$. Similarly if \Adam plays $b$  then any action played by \Eve leads with probability $\frac{1}{2}$ either to $s_3$ or to $s_4$.  In the states $s_1, s_2, s_3$ and $s_4$, which are indistinguishable by \Eve, the action of \Adam has no impact. If \Eve plays $a$ from $s_1$ or $s_4$ or $b$ from $s_2$ or $s_3$ the play goes to the final state $f$ which is a sink state. Any other action by \Eve from one of those states leave the current state unchanged.
\begin{figure}
\begin{center}
\begin{tikzpicture}[transform shape,scale=1]
\begin{scope}
\tikzstyle{every node}=[minimum size=3mm,inner sep=0.5mm]
\tikzstyle{every loop}=[->,shorten >=1pt,looseness=7,]
\tikzstyle{loop top}=[in=55,out=125,loop]
\tikzstyle{loop down}=[in=-55,out=-125,loop]
\tikzstyle{loop right}=[in=35,out=-35,loop]
\tikzstyle{loop left}=[in=145,out=215,loop]

\node[draw,circle] (s0) at (0,1) {$s_0$};

\node[draw,circle] (s1) at (-2,-1) {$s_1$};
\node[draw,circle] (s2) at (-2,-2) {$s_2$};
\node[draw,circle] (s3) at (2,-1) {$s_3$};
\node[draw,circle] (s4) at (2,-2) {$s_4$};

\node[inner sep=0,minimum size=0mm] (in1) at (-0.5,-0.5) {};
\path[draw] (s0) to node[above left] {$ \any \barre a $ }(in1);
\path[->,draw,bend right] (in1) to  node[below][xshift=1mm] {$\frac{1}{2}$} (s1);
\path[->,draw,bend left] (in1) to node[right][xshift=1mm] {$\frac{1}{2}$} (s2);

\node[inner sep=0,minimum size=0mm] (in2) at (0.5,-0.5) {};
\path[draw] (s0) to node[above right] {$ \any \barre b $}(in2);
\path[->,draw,bend left] (in2) to node[below][xshift=-1mm] {$\frac{1}{2}$} (s3);
\path[->,draw,bend right] (in2) to node[left][xshift=-1mm] {$\frac{1}{2}$}(s4);

\node[draw,circle,double] (f) at (0,-4) {$f$} edge [in=30,out=-30,loop] ();
\node at (1.1,-4) {$\any\barre\any$};
%node[below] {$\any\barre\any$}

\path[->,draw] (s1) to  node[above right][yshift=1mm][xshift=-1mm] {$a \barre \any$ } (f);
\path[->,draw] (s2) to  node[below left] {$b \barre \any$ } (f);

\path[->,draw] (s3) to  node[above left][yshift=1mm][xshift=1mm] {$b \barre \any$ } (f);
\path[->,draw] (s4) to  node[below right] {$a \barre \any$ } (f);

\draw[dashed] (-3.5,0) rectangle (3.5,-3);

%\path (f) edge [loop, in=80, out=-60] node[below] {$\any\barre\any$} (f);
\path (s1) edge [loop left] node[left] {$b\barre\any$} (s1);
\path (s2) edge [loop left] node[left] {$a\barre\any$} (s2);
\path (s3) edge [loop right] node[right] {$a \barre\any$} (s3);
\path (s4) edge [loop right] node[right] {$b\barre\any$} (s4);

\end{scope}
\end{tikzpicture}
\end{center}
\caption{A concurrent arena where \Adam is perfectly informed while \Eve cannot distinguish states $s_1,s_2,s_3$ and $s_4$.}\label{fig:exemple-first}
\end{figure}
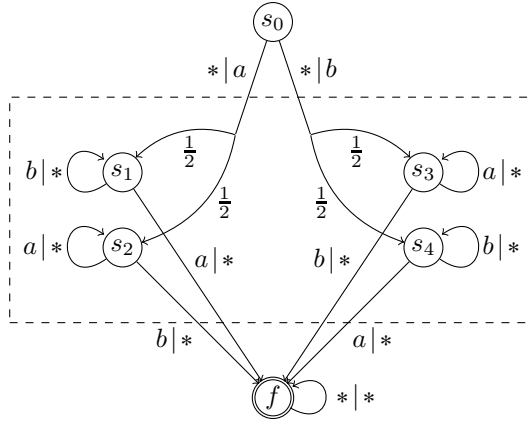

\end{example}
}

{In order to choose their moves the players respect strategies, and, for this, they may use all the information they have about what was played so far. However, if two partial plays are equivalent for $\requiv_\Ei$ (\resp $\requiv_\Ei$), then \Eve (\resp \Adam) cannot distinguish between them, and should behave the same. This leads to the following notion.
}

An \defin{observation-based pure strategy} (simply called a strategy in
the following) for \Eve is a function $\strat:
(\states/_{\requiv_\Ei})^*\rightarrow \actions_\Evei$, \emph{i.e.}, to choose
her next action, Eve considers the sequence of observations she {has
  seen} so far. We {overload} $\strat$ by writing $\strat(\play)$
instead of $\strat([\play]_{\requiv_\Ei})$: in
particular, a strategy $\strat$ {for \Eve} is such that
$\strat(\play)=\strat(\play')$ whenever $\play\requiv_\Ei \play'$ {(and
similarly for Adam)}.

A \defin{finite-memory strategy} for \Eve is a strategy that can be
%implemented by a finite automaton.
computed by a finite automaton with output that reads the observation
sequence of the partial play and outputs the next action of Eve. We do
not give a precise technical definition because it is not needed in
this work. The size of such a strategy corresponds to the number of
states of the automaton.

%% with memory set $M$ ($M$ being a finite set) is a triple $\strat=(Move,Up,m_0)$ where $m_0\in M$ is the initial value of the memory, $Move:M\rightarrow {\actions_\Evei}$ associates an action with any element of $M$ and $Up:M\times \states/_{\requiv} \rightarrow M$ is a mapping updating the memory with respect to an observation. One defines $\strat(s_0)=Move(m_0)$ and, for any $n\geq 1$, $$\strat(s_0\cdots s_n)=Move(Up(\cdots Up(Up(m_0,[s_1]_{\requiv}),[s_2]_{\requiv}),\cdots,[s_n]_{\requiv})\cdots)$$ 
%%  Hence, a finite-memory strategy is an observation-based strategy that can be implemented by a finite transducer with $M$ as a set of control states.

{Strategies for \Adam are defined in a similar way by replacing $\requiv_\Ei$ by $\requiv_\Ai$.}

\reviewF{1}{Definitions: explicitly mention that even when Adam is not perfectly informed he observes the actions}
\answer{We added that in the case where Adam is more informed than Eve, as this is the only case where we have Adam imperfectly informed.}

{\begin{remark}\label{rk:action-visible}
 In our definition of a strategy we implicitly assume that the players only observe the
 sequence of states and not the corresponding sequence of
 actions. While the fact that a player does not observe what his adversary {has}
 played is reasonable (otherwise imperfect information on states would
 make less sense) one could object that the player should observe the actions
 she {has} played so far.
 However, as the players do not use randomisation in their strategies,
 they can always retrieve the actions they played so far.
 
\oschanged{Moreover, in the special case where \Adam is more informed than \Eve (as later studied in Section~\ref{subsection:AdamMoreInformed}), we can also note that, when playing against a fixed strategy of \Eve, he can always retrieve the actions she played so far (as he knows the strategy of \Eve and also $[\play]_{\requiv_\Ei}$ for any partial play $\play$).}
 \end{remark}}

%\cl{In the definition of outcomes below, the strategies are applied to
%  actual plays and not to the equivalence class.}
%  \os{I fixed it. Should be better now.  Technically it was fine because we say (beginning of that page) that $\phi(\play)$ is a short hand for $\phi([\play]_{\_{\requiv_\Ei}})$}
Let $\arena = (\states,\actions_\Evei,\actions_\Adami,\ftrans,\requiv_\Ei,\requiv_\Ai,)$ be an arena, let $s_0\in \states$ be an initial state, $\strat_\Evei$ be a strategy for \Eve and $\strat_\Adami$ be a strategy for \Adam. 
First we let $\outcomes{s_0}{\strat_\Evei}{\strat_\Adami}$ to be the set
of all possible plays when the game starts in $s_0$ and when \Eve and
\Adam respectively follows $\strat_\Evei$ and $\strat_\Adami$: $\play = s_0(\action_\Ei^0,\action_\Ai^0)s_1(\action_\Ei^1,\action_\Ai^1)\cdots$ belongs to $\outcomes{s_0}{\strat_\Evei}{\strat_\Adami}$ iff  $$\ftrans(s_i,\strat_\Evei([s_0]_{/{\requiv_\Ei}}[s_0]_{/{\requiv_\Ei}}\cdots [s_i]_{/{\requiv_\Ei}}),\strat_\Adami([s_0]_{/{\requiv_\Ai}}[s_0]_{/{\requiv_\Ai}}\cdots [s_i]_{/{\requiv_\Ai}}))(s_{i+1})>0$$ for every $i\geq 0$.
Then we are interested in defining the probability of a (measurable) set of plays, knowing that \Eve (\resp \Adam) uses $\strat_\Evei$ (\resp $\strat_\Adami$).
This is done in the usual way (see \emph{e.g.}~\cite{chatterjeePHD}): once a pair $(\strat_\Evei,\strat_\Adami)$ of  strategies for both players is fixed, one is left with a (possibly infinite) Markov chain that naturally induces a probability space over the Borel $\sigma$-field generated by the cones, where for any partial play $\play$ starting in $s_0$ the cone for $\play$ is the set $\cone{\play}=\play\cdot ((\actions_\Ei\times\actions_\Ai)\cdot\states)^\omega$ of all infinite plays with prefix $\play$. We let $\proba{s_0}{\strat_\Evei}{\strat_\Adami}$ denote the corresponding probability measure over this space.

An \defin{objective} for \Eve is a (measurable) set $\objective$ of
plays: a play is won by \Eve if it belongs to $\objective$; otherwise
it is won by \Adam. A \defin{concurrent game with imperfect
  information} (simply called a game in the following) is a triple $\game = (\arena,s_0,\objective)$ where $\arena$ is an arena, $s_0$ is an initial state and $\objective$ is an objective. 
In the sequel we focus on the following special classes of {\hbox{$\omega$-regular}} objectives (note that all of them are Borel sets hence, measurable) that we define using a subset $F\subseteq \states$ of \defin{final} states.

A \defin{reachability objective} (\resp \defin{safety}) is of the form $(\states\cdot (\actions_\Ei\times\actions_\Ai))^*\fstates( (\actions_\Ei\times\actions_\Ai)\cdot\states)^\omega$ (\resp of the form $((\states \setminus F)\cdot (\actions_\Ei\times\actions_\Ai))^\omega$) : a play is winning if it contains (\resp does not contain) a final state. 

A \defin{B\"uchi objective} (\resp \defin{co-B\"uchi objective}) is of the form $\bigcap_{k\geq 0}(\states\cdot  (\actions_\Ei\times\actions_\Ai))^{\geq k}\fstates ((\actions_\Ei\times\actions_\Ai)\cdot\states)^\omega$ (\resp of the form $(\states\cdot  (\actions_\Ei\times\actions_\Ai))^* ((\states \setminus F)\cdot  (\actions_\Ei\times\actions_\Ai))^\omega$) : a play is winning if it goes infinitely often (\resp finitely often)  through  final states.

A reachability (\resp safety, B\"uchi, co-B\"uchi) game is a game equipped with a reachability (\resp safety, B\"uchi, co-B\"uchi) objective. In the sequel we may replace $\mathcal{O}$ by $F$ when it is clear from the context which objective we consider.

Fix a game $\game= (\arena,s_0,\objective)$.  { A strategy
  $\strat_\Evei$ for \Eve is \defin{surely winning} if, for any
  counter-strategy $\strat_\Adami$ for \Adam,
  $\outcomes{s_0}{\strat_\Evei}{\strat_\Adami}\subseteq \objective$. If
  such a strategy exists, we say that \Eve \defin{surely wins}
  $\game$.}  A strategy $\strat_\Evei$ for \Eve is \defin{almost-surely
  winning} (\resp \defin{positively winning}) if, for any
counter-strategy $\strat_\Adami$ for \Adam,
$\proba{s_0}{\strat_\Evei}{\strat_\Adami}(\objective)=1$ (\resp
$>0$). If such a strategy exists, we say that \Eve
\defin{almost-surely wins} (\resp \defin{positively wins}) $\game$.

In this paper, we are interested in deciding existence of
almost-surely/positively winning strategies for \Eve for
safety/reachability/B\"uchi/co-B\"uchi games.

\begin{example}\label{example:noDeterminacy} 
Consider the (perfect information) concurrent reachability game 
depicted below with $\win$
 as the unique final state. In state $\wait$, if both players choose
 the same action then they stay in state $\wait$ and otherwise they move to state $\win$. In state $\win$, all choices of actions stay in state $\win$. 
  \Eve does not have any almost-surely winning strategy.
  
  \begin{center}
  \begin{tikzpicture}[transform shape, scale=1]
\begin{scope}

\tikzstyle{every loop}=[->,shorten >=1pt,looseness=7,]
\tikzstyle{loop top}=[in=55,out=125,loop]
\node[draw,circle] (q_0) at (0,0) {$\wait$};

\node[draw,circle,double] (q_f) at (2,0) {$\win$};

\draw[->,>=latex] (q_0) to  node[above] {$\head\barre\tail$}
node[above][yshift=4mm] {$\tail\barre\head$ }(q_f);
\path (q_0) edge [loop top] node[above] {$\head\barre\head \;\; \tail\barre\tail$} (q_0);
\path (q_f) edge [loop top] node[above] {$\any\barre\any$} (q_f);

\end{scope}
\end{tikzpicture}
\end{center}

  Indeed, given any strategy
 $\strat_\Evei$ for \Eve, the counter-strategy $\strat_\Adami$ for \Adam mirroring the strategy of \Eve
 (\emph{i.e.}\  $\strat_\Adami=\strat_\Evei$) only allows for the play
   $\wait^\omega$ and hence, $\proba{s_0}{\strat_\Evei}{\strat_\Adami}(\objective)=0$.
    Similarly \Adam does not  have an almost-surely winning strategy.
 For any fixed strategy $\strat_\Adami$ of \Adam, any counter-strategy
 $\strat_\Evei$ for \Eve that satisfies $\strat_\Evei(\wait) \neq
 \strat_\Adami(\wait)$  is such that
 $\proba{s_0}{\strat_\Evei}{\strat_\Adami}(\objective)=1$.  
  \end{example}

%++++++++++++++++++++++++++++++++++++++%
%+++++++++++++ UNDECIDABILITY ++++++++++%
%++++++++++++++++++++++++++++++++++++++%

\section{Undecidability Results} \label{sec:undecidability}

In this section we provide undecidability results for certain
combinations of types of winning strategies and objectives.  An easy
consequence of undecidability results for probabilistic
$\omega$-automata from~\cite{BBG08} is stated in the following theorem.
{In these reductions, \Eve plays alone and cannot distinguish any states of the game. The states and transitions of the game are those of the $\omega$-automaton and the strategy of \Eve corresponds to the input word.}

\begin{theorem}\label{Th:undecidabilityPOMDP}
{The decision problems whether \Eve almost-surely wins a given
co-B\"uchi game or positively wins a given B\"uchi game are undecidable
(even if the set of actions of \Adam is a singleton). }
\end{theorem}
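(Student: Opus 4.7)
The plan is to give a direct reduction from two undecidability results of Baier, Bertrand and Gr\"o{\ss}er \cite{BBG08} on probabilistic $\omega$-automata, following the hint preceding the theorem. Recall that \cite{BBG08} shows that: (i) it is undecidable, given a probabilistic co-B\"uchi automaton $\mathcal{A}$ over a finite alphabet $\actions$, whether there exists an infinite word $w\in \actions^\omega$ that is accepted by $\mathcal{A}$ with probability $1$; and (ii) it is undecidable, given a probabilistic B\"uchi automaton $\mathcal{A}$, whether there exists $w\in \actions^\omega$ accepted by $\mathcal{A}$ with positive probability.

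Starting from such an automaton $\mathcal{A}=\langle Q,\actions,q_0,\delta,F\rangle$ (with $\delta:Q\times \actions\to \distribution{Q}$), I build an arena $\arena_{\mathcal{A}}$ whose state space is $Q$, whose action set for \Eve is $\actions_\Evei=\actions$ and whose action set for \Adam is a singleton $\actions_\Adami=\{\star\}$. The transition function is $\ftrans(q,a,\star)=\delta(q,a)$. I put the coarsest possible equivalence relation on $Q$ for \Eve, namely a single class, so that \Eve is \emph{totally blind}; the relation $\requiv_\Adami$ is irrelevant since \Adam has only one available action. The initial state is $q_0$, and the set of final states is $F$. I equip $\arena_{\mathcal{A}}$ with the co-B\"uchi (\resp B\"uchi) objective associated with $F$ to obtain a game $\game_{\mathcal{A}}$.

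The key step is the correspondence between strategies of \Eve in $\game_{\mathcal{A}}$ and input words of $\mathcal{A}$. Since \Eve only sees one observation class, a pure observation-based strategy $\strat_\Evei$ is entirely determined by the sequence of actions it outputs along the play, i.e.\ $\strat_\Evei$ is canonically identified with an infinite word $w_{\strat_\Evei}=\strat_\Evei(0)\strat_\Evei(1)\cdots \in \actions^\omega$ (where $\strat_\Evei(i)$ denotes the action played after $i$ moves). As \Adam has no choice, fixing $\strat_\Evei$ yields a Markov chain isomorphic to the one obtained by running $\mathcal{A}$ on $w_{\strat_\Evei}$, so the two probability spaces are canonically isomorphic, and for every Borel set of runs its probability under $\mathcal{A}(w_{\strat_\Evei})$ equals the probability of the corresponding set of plays under $\proba{q_0}{\strat_\Evei}{\strat_\Adami}$. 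In particular, \Eve almost-surely wins the co-B\"uchi game $\game_{\mathcal{A}}$ iff there exists a word $w$ accepted with probability $1$ by $\mathcal{A}$ under the co-B\"uchi condition; and \Eve positively wins the B\"uchi game iff there exists a word accepted with positive probability.

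There is no real obstacle in this reduction, which is essentially a restatement: the only thing to check carefully is that measurability and the $\sigma$-field of cones match on both sides, which follows immediately from the isomorphism of Markov chains. Combining the two equivalences with (i) and (ii) of \cite{BBG08} directly yields the two claimed undecidability statements, even in the very restricted setting where $|\actions_\Adami|=1$ and \Eve is totally blind.
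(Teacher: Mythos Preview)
Your proposal is correct and follows essentially the same approach as the paper's own proof: both encode a probabilistic $\omega$-automaton as a one-player game where \Eve is blind and \Adam has a single action, identify pure strategies of \Eve with infinite input words, and invoke the undecidability results of \cite{BBG08} for the almost-sure co-B\"uchi and positive B\"uchi emptiness problems. The only difference is presentational: you spell out the Markov-chain isomorphism and measurability in slightly more detail than the paper does.
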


\begin{proof}
Consider a probabilistic automaton $\mathcal{A}$ on $\omega$-words as
in~\cite{BBG08}. Now consider a concurrent game with imperfect
information where \Adam plays no role and where \Eve's actions are the
letters from the input alphabet $A$ of $\mathcal{A}$ and whose states
are the {ones} of the automaton. Moreover all states are
$\requiv_\Ei$-equivalent. Now the transition function of the game mimics
the {one} of the automaton. As \Eve does not observe anything, a (pure)
strategy $\phi$ of \Eve can be described as an infinite word $u_\phi$
in $A$ (the $i$-th letter being the $i$-th action played by \Eve), and
$\phi$ is almost-surely (\emph{resp.} positively) winning iff the
probability of a run of $\mathcal{A}$ over $u_\phi$ to be accepting is
$1$ (\emph{resp.} strictly positive). The undecidability results 
follow from the undecidability of the emptiness problem for
co-B\"uchi (\emph{resp.} B\"uchi) probabilistic automaton with the
almost-sure (\emph{resp.} positive) semantics~\cite{BBG08}.  
\end{proof}

In the following we prove that the existence of positively winning
strategies for safety objectives is undecidable.  Our result is based
on the undecidability of the value 1 problem for probabilistic
automata on finite words~\cite{GO10}. For simpler use in our reduction
we reformulate this problem in terms of games.

Consider the class of concurrent reachability games $\game$ with
imperfect information with the following properties. \Eve is blind
(\emph{i.e.} $\requiv_\Ei$ consists of a unique equivalence class),
and \Adam has no impact on the game (\emph{i.e.} his set of actions is
a singleton). Furthermore, there is a special action $\sharp$ that
\Eve can play at any time, and that leads (depending on the current
state) either to a final sink state or to a non-final sink state. The
final sink state is the only final state. Intuitively, one can think
of such a game as one where \Eve plays a sequence of actions and then
declares by $\sharp$ that she stops (and she wins if she stopped in a
state that leads to the winning sink).

{We refer to this type of game as \emph{probabilistic automaton game} (PA game)
because it corresponds to probabilistic automaton on finite words (see
\cite{Paz71} for an introduction to probabilistic automata): a
strategy of \Eve corresponds to a finite word followed by $\sharp$
(without playing $\sharp$ \Eve surely loses), and the probability that
it is winning is the probability of the word to be accepted in the
automaton. Then we have the following result, which directly follows
from the undecidability of the value 1 problem for probabilistic automata~\cite{GO10}.

\begin{lemma}\label{lem:value1}
For a given a PA game, it is undecidable whether
Eve has for each $0 < \varepsilon < 1$ a strategy that is winning with
probability $(1-\varepsilon) < p < 1$.
\end{lemma}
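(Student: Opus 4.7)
The plan is to obtain the lemma as an essentially direct reformulation of the value~$1$ problem for probabilistic automata on finite words, whose undecidability is the main result of~\cite{GO10}. Recall that, given a probabilistic automaton $\mathcal{A}$ over an alphabet $A$, the value~$1$ problem asks whether $\sup_{w \in A^*} P_\mathcal{A}(w) = 1$, where $P_\mathcal{A}(w)$ denotes the probability that the finite word $w$ is accepted by $\mathcal{A}$.

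First I would make the correspondence sketched in the paragraph preceding the lemma completely explicit. Given $\mathcal{A}$, I build a PA game whose states are those of $\mathcal{A}$ together with two fresh self-looping sinks $s_{\mathrm{win}}$ (the unique final state) and $s_{\mathrm{lose}}$; Eve's actions are $A \cup \{\sharp\}$; Adam's action set is a singleton; on letters of $A$ the transition function mimics that of $\mathcal{A}$; and on $\sharp$ every state of $\mathcal{A}$ is sent deterministically to $s_{\mathrm{win}}$ if it is accepting and to $s_{\mathrm{lose}}$ otherwise. The equivalence $\requiv_\Ei$ has a single class so that \Eve is blind, which matches the definition of a PA game.

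A pure strategy of \Eve is then an infinite word over $A \cup \{\sharp\}$: if it never plays $\sharp$, \Eve never enters $s_{\mathrm{win}}$ and wins with probability~$0$; otherwise, letting $w \in A^*$ be the prefix before the first occurrence of~$\sharp$, the winning probability equals exactly $P_\mathcal{A}(w)$. Consequently the set of strictly positive winning probabilities realised by \Eve coincides with $\{P_\mathcal{A}(w) : w \in A^*\} \setminus \{0\}$, and the property ``for every $\varepsilon \in (0,1)$ some strategy wins with probability in $(1-\varepsilon, 1)$'' becomes ``for every $\varepsilon \in (0,1)$ there exists $w \in A^*$ with $P_\mathcal{A}(w) \in (1-\varepsilon, 1)$''.

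The only delicate point, which I expect to be the main obstacle, is the strict upper bound $p < 1$, absent from the plain value~$1$ problem. I would handle it by noting that the instances produced by the reduction of~\cite{GO10} never accept any word with probability exactly~$1$ (their construction only approaches~$1$ through an unbounded product of stochastic matrices), so on such instances the value~$1$ condition coincides with the condition stated above. If one preferred not to rely on this feature, the same property can be forced generically by precomposing $\mathcal{A}$ with a fresh initial letter that, with some probability $\alpha_n < 1$ depending on the subsequent word length~$n$, routes the computation through $\mathcal{A}$ and otherwise into a rejecting copy, chosen so that $\alpha_n \to 1$ while $\alpha_n < 1$ for every~$n$. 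Combined with the correspondence above, this yields the desired undecidability.
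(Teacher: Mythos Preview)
Your reduction from the value~$1$ problem via the PA-game correspondence is exactly what the paper intends: the paper's ``proof'' is nothing more than the paragraph preceding the lemma together with the citation of~\cite{GO10}, so you are supplying the details rather than diverging from it. The correspondence you spell out (strategies $\leftrightarrow$ words, winning probability $\leftrightarrow$ acceptance probability) is correct, and you rightly isolate the strict inequality $p<1$ as the only non-trivial point.

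Your first way of handling $p<1$ (observing that the automata produced in~\cite{GO10} never accept any word with probability exactly~$1$) is legitimate. Your second way, however, is not well-formed as written: a probabilistic automaton has fixed transition distributions, so the probability $\alpha_n$ attached to ``a fresh initial letter'' cannot ``depend on the subsequent word length~$n$''. What you presumably mean can be made precise---e.g.\ add a fresh letter~$c$ and a fresh initial state from which each~$c$ moves to the original initial state with probability~$\tfrac12$ and stays put otherwise, so that $c^n w$ is accepted with probability $(1-2^{-n})P_{\mathcal A}(w)<1$ while the supremum over $n$ and $w$ recovers the value of~$\mathcal A$---but you should state it this way rather than as a transition that looks ahead.

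A cleaner alternative, avoiding any gadget: the question ``does some word~$w$ satisfy $P_{\mathcal A}(w)=1$?'' is decidable (it is a reachability question on supports: check whether some reachable support is contained in the accepting states). Since $\mathrm{value}(\mathcal A)=1$ iff either such a word exists or the condition of the lemma holds, decidability of the lemma's condition would entail decidability of the value~$1$ problem, contradicting~\cite{GO10}.
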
 
}

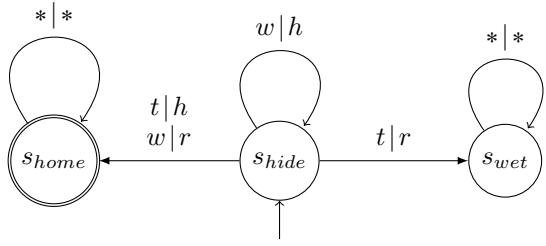
\begin{figure}[t]
\centering
\begin{tikzpicture}[transform shape, scale=1]
\begin{scope}

\tikzstyle{every loop}=[->,shorten >=1pt,looseness=7,]
\tikzstyle{loop top}=[in=55,out=125,loop]
\node[draw,circle,initial,initial text=,initial where = below] (qhide) at (0,0) {$s_{hide}$};
\node[draw,circle,double] (qhome) at (-3,0) {$s_{home}$};
\node[draw,circle] (qwet) at (3,0) {$s_{wet}$};

\draw[->,>=latex] (qhide) to  node[above] {$w\barre r$} node[above][yshift=4mm] {$t\barre h$ }(qhome);
\draw[->,>=latex] (qhide) to  node[above] {$t\barre r$} (qwet);
\path (qhide) edge [loop top] node[above] {$w\barre h$} (qhide);
\path (qhome) edge [loop top] node[above] {$\any\barre\any$} (qhome);
\path (qwet) edge [loop top] node[above] {$\any\barre\any$} (qwet);
\end{scope}
\end{tikzpicture}
\caption{The Hide-or-Run game}\label{figure:HideOrRun}
\end{figure}

Our reduction that uses Lemma~\ref{lem:value1} starts from an
example of a concurrent safety game $\game_{HR}$ known as \emph{Hide-or-Run}
\cite{dAHK07} (see Figure~\ref{figure:HideOrRun}). In this game,
{\Adam can choose between hiding ($h$) and running ($r$), and \Eve} can choose between waiting ($w$) and  throwing ($t$) her {only} snowball. 
 If \Adam hides and \Eve waits, the game stays in state $s_{hide}$. If \Adam runs and \Eve throws the snowball, then \Adam is hit, and the game proceeds to sink state $s_{wet}$. In all other cases, \Adam gets home (either he runs without being hit or he can safely run after \Eve has thrown her snowball) and the game proceeds to sink state $s_{home}$. This is a safety game where \Eve wants to avoid visiting $s_{home}$.

In~\cite{dAHK07} it is shown that \Eve can only win by using a
randomised strategy that plays action $w$ in round $i$ with
probability $p_i$ such that $0<p_i<1$ for every $i$ and $\prod_i p_i>
0$ (for this, \Eve does not have to distinguish the states). 

Now the idea is to incorporate a gadget in $\game_{HR}$ that permits
\Eve to simulate random choices while playing deterministically.

\begin{theorem}
\label{theo:undecidable-positive-safety}
It is undecidable whether \Eve positively wins in a safety game (\resp co-B\"uchi game), even if $\requiv_\Ei$ consists of a single equivalence class. 
\end{theorem}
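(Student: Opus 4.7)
The plan is to reduce the undecidable value-$1$ property of Lemma~\ref{lem:value1} to positive safety. Given a PA game $G$ from that lemma, I would construct a concurrent safety game $G'$ with imperfect information in which \Eve is blind (a single $\requiv_\Ei$-class) and \Adam is fully informed, satisfying: \Eve positively wins $G'$ if and only if $G$ admits, for every $0 < \varepsilon < 1$, a pure strategy of winning probability $p \in (1 - \varepsilon, 1)$.

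The game $G'$ is obtained by composing Hide-or-Run $\game_{HR}$ with the PA game $G$, used as a coin-flipping gadget. Each round of $\game_{HR}$ is simulated by a sub-play inside $G$: \Eve plays a finite input word $u_i \in \Sigma^*$ followed by $\sharp$, and the resulting absorbing sink of $G$ (final or non-final) is read as the HR action $w$ (wait) or $t$ (throw); combined with \Adam's simultaneous choice of $h$ or $r$ this triggers the corresponding HR transition, after which the PA gadget is reset to its initial state. The resulting probability that \Eve ``waits'' in the $i$-th HR round is exactly $p_i := \Pr[u_i \sharp \text{ is accepted in } G]$.

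Thanks to the analysis of $\game_{HR}$ recalled before the theorem, \Eve positively wins $G'$ iff she can find a sequence $(u_i)_{i \geq 1}$ with $0 < p_i < 1$ for every $i$ and $\prod_i p_i > 0$. For the forward direction, if $G$ satisfies Lemma~\ref{lem:value1} I pick words $u_i$ with $p_i \in (1 - 2^{-(i+1)}, 1)$; the product $\prod_i p_i$ is then positive and each $p_i$ lies strictly in $(0,1)$, so the pure strategy $u_1 \sharp u_2 \sharp \cdots$ positively wins $G'$. For the converse, suppose $G$ fails Lemma~\ref{lem:value1}. If all words have winning probability bounded away from $1$, then $p_i \leq 1 - \varepsilon_0$ for some $\varepsilon_0 > 0$, and against \Adam's strategy of always hiding \Eve wins with probability at most $\prod_i p_i = 0$; otherwise the only way to approach probability $1$ in $G$ is via a word with $p_i = 1$ exactly, in which case \Adam's strategy of running at the corresponding HR round deterministically forces $s_{home}$.

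The main obstacle is designing the gadget so that \Eve cannot trivially win by stalling indefinitely in the PA phase without ever playing $\sharp$---if she never triggers a round, $\game_{HR}$ never advances and $s_{hide}$ stays safe forever. I would address this by giving \Adam an action that forces a trigger of the current round using the PA state reached so far, so that stalling leaves \Eve exposed to \Adam's triggered choice of $h$ or $r$ at a moment chosen by \Adam. The co-B\"uchi variant is then immediate: $s_{home}$ is absorbing in $G'$, so it is visited infinitely often exactly when it is visited at all, and positively winning the co-B\"uchi objective on $G'$ coincides with positively winning the safety objective.
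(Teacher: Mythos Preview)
Your overall plan coincides with the paper's: reduce from Lemma~\ref{lem:value1} by using the PA game as a biased-coin gadget inside Hide-or-Run, so that round~$i$ produces ``wait'' with probability $p_i$ and \Eve positively wins iff she can realise a sequence $(p_i)$ with $0<p_i<1$ and $\prod_i p_i>0$. Your two-case converse analysis is also correct, and the co-B\"uchi remark is fine.

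The genuine gap is precisely where you identify the main obstacle. Your proposed fix---giving \Adam an action that ``forces a trigger of the current round using the PA state reached so far''---breaks the forward direction of the reduction. \Adam has perfect information, so at every step of the PA phase he sees the actual PA state; in particular he sees whether $\sharp$ from that state would lead to the final sink. Whenever a non-accepting PA state is visited (which already happens at the start of every round in any non-trivial instance), \Adam can trigger, have the outcome read as ``throw'', and pair it with ``hide'' to reach $s_{home}$ with probability~$1$. Thus with your gadget \Eve can \emph{never} positively win $G'$, regardless of whether the PA game satisfies the value-$1$ property, and the reduction collapses. Any variant in which \Adam both chooses the triggering moment \emph{and} the outcome depends on the current PA state suffers from this.

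The paper's solution is to decouple the anti-stalling mechanism from the PA state entirely. \Adam receives a third option \emph{cheat} (alongside $h$ and $r$), played from $s_{hide}$ \emph{before} the PA phase begins; it leads to a fresh state $s_c$ in which every non-$\sharp$ action of \Eve flips a fair coin toward a losing sink $s_l$, while $\sharp$ leads to a safe sink $s_w$. The key asymmetry is that \emph{cheat} is lethal only against an \Eve who never plays $\sharp$ again (she almost-surely drifts to $s_l$), but harmless against an \Eve who does play $\sharp$ after finitely many steps (she reaches $s_w$ with positive probability, so on that branch \Adam has actually handed her a positive win). This is what lets the equivalence with Lemma~\ref{lem:value1} go through, and it is exactly the idea your trigger gadget is missing.
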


\begin{proof}
Consider a probabilistic automaton game $\game$ with a set of actions
disjoint from the one in the game $\game_{HR}$.  Let $\game_r$ and
$\game_h$ be two disjoint copies of $\game$ where we removed the two
states reachable by \Eve playing $\sharp$ (the $\sharp$-edges are
redirected as described below).

In the game $\game_{HR}'$ (see Figure~\ref{Figure:HRB}), the
concurrent choices of the actions in $\game_{HR}$ are simulated by the
imperfect information. All states are indistinguishable by Eve. First
\Adam makes his choice $r$ or $h$ from $s_{hide}$ (\Eve's action has
no impact). The game then moves to the initial state of $\game_r$ or
$\game_h$, depending on the choice of Adam (ignore the action $cheat$
for the moment, which is explained later). Because of the imperfect
information \Eve does not observe Adam's choice.

In $\game_r$ and $\game_h$ we removed the target states of $\sharp$
but \Eve can still play $\sharp$: if in $\game$ it was leading to the
final state it now behaves as \Eve playing $w$ from $s_{hide}$, and
otherwise it behaves as \Eve playing $t$ from $s_{hide}$ (see Figure
\ref{Figure:HRB}). 

Finally, in order to prevent \Eve from playing an {infinite sequence of
actions without $\sharp$}, we add an extra small gadget where \Adam is
allowed to declare that \Eve will cheat.  If he plays $cheat$ from
$s_{hide}$ this leads to a new state $s_c$ where the following may
happen depending on the next move of \Eve (the action of \Adam has no
impact): if she plays $\sharp$ from $s_c$ then the play goes to a
sink state $s_w$ (that is not final); if she does not play $\sharp$
from $s_c$ then with probability $1/2$ the play stays in $s_c$ and
with probability $1/2$ the play goes to a sink final state
$s_l$. Hence, from $s_c$ if she never plays $\sharp$, then the play
almost-surely ends in $s_l$.

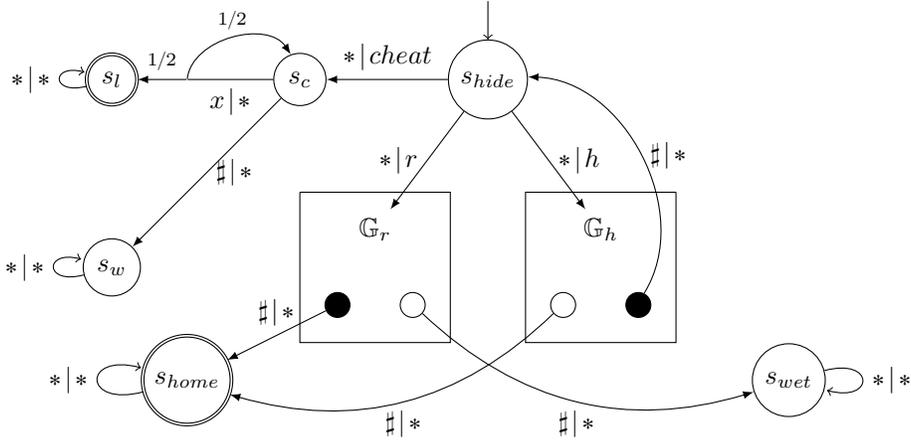
\begin{figure}[t]
\begin{center}
\begin{tikzpicture}[transform shape, scale=1]
\begin{scope}

\tikzstyle{every loop}=[->,shorten >=1pt,looseness=7,]
\tikzstyle{loop top}=[in=55,out=125,loop]
\node[draw,circle,initial,initial text=,initial where = above] (qhide) at (0,4) {$s_{hide}$};
\node[draw,circle,double] (qhome) at (-4,0) {$s_{home}$};
\node[draw,circle] (qwet) at (4,0) {$s_{wet}$};
\path (qhome) edge [loop left] node[left] {$\any\barre \any$} (qhome);
\path (qwet) edge [loop right] node[right] {$\any\barre \any$} (qwet);

\node (nameGr) at (-1.5,2) {$\game_r$};
\node[draw,circle,fill] (finalGr) at (-2,1) {};
\node[draw,circle] (nonfinalGr) at (-1,1) {};
\draw (-2.5,2.5) -- (-2.5,0.5) -- (-0.5,0.5) -- (-0.5,2.5) -- (-2.5,2.5);

\node (nameGw) at (1.5,2) {$\game_h$};
\node[draw,circle,fill] (finalGw) at (2,1) {};
\node[draw,circle] (nonfinalGw) at (1,1) {};
\draw (2.5,2.5) -- (2.5,0.5) -- (0.5,0.5) -- (0.5,2.5) -- (2.5,2.5);

\node[draw,circle] (sc) at (-2.5,4) {$s_{c}$};
\node[draw,circle] (sw) at (-5,1.5) {$s_{w}$};
\node[draw,circle,double] (sl) at (-5,4) {$s_{l}$};
\node[inner sep=0,minimum size=0mm] (inter) at (-4,4) {};
\draw[->,>=latex] (qhide) to  node[above] {$\any\barre cheat$} (sc);
\draw[->,>=latex] (sc) to node[right,pos=0.5] {$\sharp\barre \any$} (sw);
\path (sw) edge [loop left] node[left] {$\any\barre \any$} (sw);
\path (sl) edge [loop left] node[left] {$\any\barre \any$} (sl);
\draw[-] (sc) to node[below,pos=0.5] {$x\barre \any$} (inter);
\draw[->,>=latex] (inter) to node[above,pos=0.5] {\scriptsize $1/2$} (sl);
\draw[->,>=latex] (inter) to [bend left=70] node[above,pos=0.5] {\scriptsize $1/2$} (sc);

\draw[->,>=latex] (qhide) to  node[left] {$\any\barre r$} (nameGr);
\draw[->,>=latex] (finalGr) to node[above] {$\sharp\barre \any$} (qhome);
\draw[->,>=latex] (nonfinalGr) to[bend right] node[below] {$\sharp\barre \any$} (qwet);

\draw[->,>=latex] (qhide) to  node[right] {$\any\barre h$} (nameGw);
\draw[->,>=latex] (finalGw) to[bend right=60] node[right] {$\sharp\barre \any$} (qhide);
\draw[->,>=latex] (nonfinalGw) to[bend left] node[below] {$\sharp\barre \any$} (qhome);

\end{scope}
\end{tikzpicture}
\caption{The modified version of \emph{Hide-or-Run}:
  $\game_{HR}'$. Black states in $\game_r$/$\game_h$ correspond to
  states from which $\sharp$ led to the final state in $\game$, and
  $x$ denotes any letter different from $\sharp$.}\label{Figure:HRB}
\end{center}
\vspace{-0.5cm}
\end{figure}

\reviewF{1}{iff Eve in G has an almost sure winningÃ¢ÂÂ¦ you are not proving this, but rather iff for every
 $\epsilon>0$ there exists a strategy such thatÃ¢ÂÂ¦}
\answer{You are right. We changed this.}

Let $\game_{HR}'$ be this new game, where we recall that all states
are indistinguishable for \Eve, $s_{hide}$ is the initial state and
$\{s_{home},s_l\}$ are the final states.  \oschanged{We claim that \Eve
positively wins game $\game_{HR}'$ iff \Eve in $\game$ has strategies winning with probability arbitrarily close to $1$}. Indeed,
consider a strategy $\phi$ for \Eve in $\game_{HR}'$. As \Eve cannot
distinguish any state in $\game_{HR}'$, and does not observe the
actions played by Adam, $\phi$ is independent of Adam's choices. 

{If the strategy of {\Eve} consists in playing $\sharp$ only finitely often, it cannot be positively
winning as it suffices for \Adam to wait for the last $\sharp$ and
then play $cheat$. More precisely, the strategy  of {\Adam} consists
in playing (in state $s_{\textrm{hide}}$) the action $h$ whenever
\Eve's strategy  will still play $\sharp$ in the future, and $cheat$
if \Eve will never play $\sharp$ in the future. It can be shown that following this strategy \Adam wins against the strategy of \Eve with probability $1$.}

Thus, in the following we only consider strategies $\phi$ of {\Eve}
that play $\sharp$ infinitely often. An equivalent description of such
strategies $\phi$ is by a sequence $(\phi_i)_{i\geq 1}$ of strategies
for \Eve in $\game$: $\phi$ consists in playing an arbitrary letter
then playing as $\phi_1$ until playing $\sharp$, then playing an
arbitrary letter, then playing as $\phi_2$ until playing $\sharp$ and
so on (the arbitrary letter is used here when \Adam chooses to move to
$\game_r$, $\game_h$ or $s_c$).  

For one direction, assume that $\varphi$ is positively winning in
$\game_{HR}'$. Let $p_i$ be the probability that \Eve wins in $\game$
when playing according to $\phi_i$. Then, from the properties of
$\game_{HR}$, it follows that $\phi$ is winning iff $0<p_i<1$ for every
$i\geq 1$ and $\prod_i p_i> 0$. This implies that the sequence
$(p_i)_{i\geq 1}$ converges to $1$ and hence the $\varphi_i$ are
strategies as in Lemma~\ref{lem:value1}.

Conversely, if \Eve has strategies winning with probabilities
arbitrarily close to $1$ as in Lemma~\ref{lem:value1}, then one can
choose the $\phi_i$ such that {$1 > p_i \ge
  1-\dfrac{1}{(i+1)^{2}}$} which ensures $0<p_i<1$ for every $i\geq 1$
and $\prod_i p_i> 0$. Indeed, $$\prod_{i \geq 1} 1 - \frac{1}{(i+1)^{2}}=\lim_{m \rightarrow \infty} \prod_{i=1}^{m} 1-\frac{1}{(i+1)^{2}} = \lim_{m \rightarrow \infty}  \frac{m+2}{2m+2} = \frac{1}{2}$$ This family $\varphi_i$ defines a strategy for
\Eve in $\game_{HR}'$. Again using the properties of $\game_{HR}$,
this implies that \Eve positively wins against all strategies of Adam:
either no outcome ever reaches $s_c$, in which case $\game_{HR}$ is
simulated, or if an outcome reaches $s_c$, then it does with positive
probability, and then it also reaches $s_w$ with positive probability.
\end{proof}

%++++++++++++++++++++++++++++++++++++++%
%+++++++++++++++ DECIDABILITY ++++++++++%
%++++++++++++++++++++++++++++++++++++++%

\section{Positive Winning in Reachability Games} \label{sec:positive-reachability}

We now address the decidability of whether \Eve positively wins in a reachability game, and
we show decidability (and matching lower bounds) for the case where 
\begin{inparaenum}[(i)]
\item \Adam is perfectly informed and 
\item \Adam is more informed than \Eve.
\end{inparaenum}

For the rest of this section fix an arena $\arena = (\states,\actions_\Evei,\actions_\Adami,\ftrans,\requiv_\Ei,\requiv_\Ai)$ and a set of final states $F\subseteq S$.

To later address almost-sure winning (Section~\ref{sec:as}) we need to consider games that
may start in different states, and
we are interested in strategies that are winning from all of these
states. 
For this reason, we define for any
subset $B$ of states a game $(\arena,B,\objective)$ that is played as
follows: there is a new initial step where \Adam picks a state $s_0$
in $B$ and then the play {proceeds as} in
$(\arena,s_0,\objective)$. Hence, a strategy $\strat$ for \Eve in such a
game is almost-surely (\emph{resp.} positively) winning iff $\strat$ is
almost-surely (\emph{resp.} positively) winning in
$(\arena,s_0,\objective)$ for \emph{every} state $s_0\in B$.

\subsection{Winning in a Finite Number of Moves.}

We start with a general result that does not depend on how the players are informed.
It states that if \Eve can positively win in a reachability game then she can do so within a bounded number of moves.

\begin{proposition}\label{proposition:positiveWinningReachability}
Let $B\subseteq S$ be a subset of states and assume that \Eve has a positively winning strategy $\strat$ in the reachability game $(\arena,B,F)$. Then, there is a bound $N$ and some $0<\epsilon_{{B}}\leq 1$ such that whenever \Eve respects $\strat$ in {the} game $(\arena,B,F)$, the probability that the resulting play visits a final state within the $N$ first moves is at least $\epsilon_B$. 
\end{proposition}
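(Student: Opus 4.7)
The plan is to argue by contradiction using a compactness argument on the space of Adam's strategies. Suppose no such $N$ and $\epsilon_K$ exist. Then for each $n\in\mathbb{N}$, one can find an initial state $s_0^n\in K$ and an Adam strategy $\strat_\Adami^n$ such that, when \Eve plays $\strat$, the probability of reaching a final state in the first $n$ steps starting from $s_0^n$ is at most $1/n$. Since $K$ is finite, after passing to a subsequence we may assume $s_0^n=s_0$ for some fixed $s_0\in K$, while the bound $1/n$ (or some sequence $\epsilon_n \to 0$) is preserved along the subsequence.

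Next I would extract a ``limit'' counter-strategy for \Adam. The space of Adam's pure observation-based strategies is $\actions_\Adami^{(\states/_{\requiv_\Ai})^*}$, a product of finite sets, which is compact in the product topology (i.e.\ pointwise convergence). Hence the sequence $(\strat_\Adami^n)_n$ has a convergent subsequence with limit $\strat_\Adami^\infty$. The key property of this limit, which follows immediately from the definition of pointwise convergence on a countable set, is that for every fixed $N$ there exists $n_0$ such that for all $n\geq n_0$ in the subsequence, $\strat_\Adami^n$ and $\strat_\Adami^\infty$ agree on every observation sequence of length at most $N$.

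I then use this to derive the contradiction. Let $R_N$ denote the event ``a final state is visited within the first $N$ moves''. Because $R_N$ is determined by the prefix of length $N$ of the play, and because, \Eve's strategy $\strat$ being fixed, such a prefix depends only on the values of Adam's strategy on observation sequences of length at most $N$, the agreement mentioned above yields $\proba{s_0}{\strat}{\strat_\Adami^\infty}(R_N)=\proba{s_0}{\strat}{\strat_\Adami^n}(R_N)$ for all large enough $n$. The right-hand side is bounded by the probability of reaching $F$ within $n$ moves (since $N\leq n$), hence by $\epsilon_n$, which tends to $0$. Therefore $\proba{s_0}{\strat}{\strat_\Adami^\infty}(R_N)=0$ for every $N$. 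Since the event ``$F$ is ever reached'' is the increasing union $\bigcup_N R_N$, continuity of measure yields $\proba{s_0}{\strat}{\strat_\Adami^\infty}(F)=0$, contradicting the assumption that $\strat$ is positively winning from $s_0\in K$. This gives the desired $N$ and $\epsilon_K$.

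The main obstacle is step 3: making precise the statement that plays of length at most $N$ depend only on finitely many values of Adam's strategy, and transferring pointwise convergence of strategies to equality of the induced finite-horizon measures. Once that measurability/continuity issue is handled cleanly, the rest is bookkeeping. Note the argument is entirely non-constructive: it does not give an explicit bound on $N$, only its existence, which is all that is required by the statement.
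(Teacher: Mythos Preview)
Your proposal is correct and follows essentially the same approach as the paper: a compactness/diagonal argument to extract from the hypothetical sequence of counter-strategies a single ``limit'' strategy $\psi$ for \Adam that prevents \Eve from reaching $F$ with positive probability, contradicting that $\strat$ is positively winning. The paper carries out the diagonal extraction by hand (building a decreasing chain $I_0\supseteq I_1\supseteq\cdots$ of infinite index sets so that the $\psi_j$ with $j\in I_k$ agree on all partial plays of length $\le k$), whereas you invoke Tychonoff/sequential compactness of $\actions_\Adami^{(\states/_{\requiv_\Ai})^*}$; these are the same argument in different clothing. A further cosmetic difference is that the paper first shows existence of $N$ with strictly positive reaching probability against every $\psi$ and then deduces the uniform $\epsilon_K$ from the fact that only finitely many values of $\psi$ matter once $N$ is fixed, while you negate the full statement (with $\epsilon$) from the start; both work.
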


\begin{proof}
For any $N>0$, any $s\in B$ and any strategy $\psi_N$ for Adam, 
call $p_N^{\psi_N,s}$ the probability of the event \emph{"a play in $(\arena,s,F)$, where \Eve respects $\strat$ and \Adam respects $\psi_N$ visits a final state within the $N$ first moves"}. 

Let $x_N^{\psi_N}= \min\{p_N^{\psi_N,s}\mid s\in B\}$. We aim to show that there exists some $N>0$ such that for each strategy $\psi_N$ for Adam, $x_N^{\psi_N}>0$.

For this, we reason by contradiction, assuming that for any bound $N>0$, \Adam has a counter strategy $\psi_N$ such that $x_N^{\psi_N}=0$. In particular, there is a state $s\in B$ such that $p_N^{\psi_N,s}=0$ for infinitely many $N$. {Hence, we can assume that the $\psi_N$ are such that} $p_N^{\psi_N,s}=0$ for every $N\geq 0$ (as to get the property for some $N$ \Adam can always use the strategy for some $N'>N$).

\reviewF{1}{fourth / fifth paragraph of the proof of the Proposition 4.1: reword these two sentences}\answer{Reworded.}

Using $(\psi_N)_{N\geq 0}$ we define a strategy $\psi$ for \Adam as
follows. We first let $I_0=\mathbb{N}$ be the set of naturals. Next we
define $\psi$ and $(I_k)_{k\geq 0}$, a decreasing sequence (for
inclusion) of \emph{infinite} subsets of the naturals. First we sort
partial plays by increasing length. We assume that $\psi$ is defined
on all partial plays of length smaller than $k$ (hence initialization
for $k=0$ comes for free) and for plays of length $k+1$ we do the
following. \oschanged{As there are finitely many plays of length $k+1$ while $I_k$
is infinite there exists an infinite subset $I_{k+1}\subseteq I_k$
such that, for all $j_1,j_2\in I_{k+1}$, both strategies $\psi_{j_1}$ and $\psi_{j_2}$ agree on plays of
length $k+1$; we define $\psi$ to behave accordingly on plays of length $k+1$.

Then, the following is a direct consequence of the definitions of $\psi$ and $(I_k)_{k\geq 0}$:
for every $k\geq 0$, the set $I_k$ is infinite; and for every $j\in I_k$ and every partial play $\play$ of length smaller than $k$, both $\psi$ and $\psi_j$ agree on $\play$.
}

In particular it implies that $x_N^{\psi}=0$ for every $N\geq 0$: indeed, $x_N^{\psi_{M}}=0$ for any $M\geq N$ and $\psi$ agrees with all $\psi_{M}$ with $M\in I_N$ (and as $I_N$ is infinite such an $M$ exists). 
Finally, as $0\leq \proba{s}{\strat}{\psi}(\objective)\leq\sum_{N\geq 0} x_N^{\psi}=0$ (here $\objective$ denotes the reachability objective defined by $F$), we conclude that $\proba{s}{\strat}{\psi}(\objective)=0$ which contradicts our initial assumption of $\strat$ being positively winning in $(\arena,s,F)$. 

The fact that there is some $\epsilon_B>0$ such that $\strat$ ensures to
reach a final state in less than $N$ moves with a probability greater
than $\epsilon_B$ is a direct consequence of the fact that one bounds the number of moves by $N$.
\end{proof}

\begin{remark}\label{rk:PosReachNoProba}
Proposition~\ref{proposition:positiveWinningReachability} implies that finite memory suffices for \Eve to positively win in a reachability game. Indeed, it suffices to follow $\strat$ for the $N$ first moves and then play the same action forever.

Another important consequence is that the values of the probabilities do not have any influence on whether \Eve positively wins in a reachability game. More precisely consider another arena $\arena'$ that is exactly as $\arena$ except that its transition function $\delta'$ is  such that for every state $s$ and every pair of actions $(\action_\Ei,\action_\Ai)$ one has $\delta(s,\action_\Ei,\action_\Ai)= 0$ iff $\delta'(s,\action_\Ei,\action_\Ai)= 0$. Then \Eve positively wins in the reachability game $(\arena,B,F)$ iff she positively wins in the reachability game $(\arena',B,F)$.
\end{remark}

\subsection{Positively Winning When \Adam Is Perfectly Informed}

We now assume that \Adam is perfectly informed.

Consider for every $n \geq 0$, the objective $\ReachN(\fstates)=(\states\cdot(\actions_\Ei\times\actions_\Ai))^{< n}\fstates ((\actions_\Ei\times\actions_\Ai)\cdot\states)^{\omega}$ where a final state has to be visited within the first $n$ steps. 
The following inductively characterises the sets $B$ for which \Eve can win $(\arena,B,\ReachN(\fstates))$.

\begin{proposition}
\label{prop:reachN-ind}
Let $B \subseteq S$ be a set of pairwise $\requiv_\Ei$-equivalent states and let $n > 0$. \Eve positively wins $(\arena,B,\ReachN(\fstates))$ if and only if there exists an action $\action_{\Ei} \in \actions_{\Ei}$ and a set $B' \subseteq S$ such that
\begin{itemize}
\item \Eve positively wins $(\arena,B',\ReachNP(\fstates))$,
\item for every $s \in B \setminus F$ and for every $\action_{\Ai} \in \actions_{\Ai}$, there exists $s' \in B'$ such that $\delta(s,\action_{\Ei},\action_{\Ai})(s')>0$.
\end{itemize}
\end{proposition}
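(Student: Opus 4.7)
I would prove both implications by a direct ``one-step unfolding'' argument. The main observation is that since every state in $K$ has the same $\requiv_\Ei$-class, all starting states in $K$ share the same initial observation for \Eve; consequently any strategy of hers must commit to a single first action $\action_\Ei$ regardless of the state actually picked by \Adam. Once this first action has been played, the play arrives in a new ``initial situation'' whose set of possible starting states is exactly the $K'$ demanded by the statement, and the remaining task for \Eve is to win $\ReachNP(\fstates)$ from~$K'$. The argument is uniform in $n$ (including the base case $n=1$, where $\ReachNP(\fstates)$ is empty so the right-hand side forces $K'=\emptyset$ and hence $K\subseteq F$, matching the LHS).

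\textbf{Sufficiency ($\Leftarrow$).} Assume $\action_\Ei$ and $K'$ exist as in the statement, and let $\strat'$ be a strategy for \Eve positively winning $(\arena,K',\ReachNP(\fstates))$. I would define \Eve's strategy $\strat$ by $\strat([s_0]_{\requiv_\Ei}) = \action_\Ei$ and $\strat([s_0]_{\requiv_\Ei}\cdot o_1\cdots o_k) = \strat'(o_1\cdots o_k)$ for $k\geq 1$, i.e., play $\action_\Ei$ first and then behave as $\strat'$ on the shifted observation sequence. Fix any $s_0\in K$ and any \Adam strategy $\strat_\Adami$: if $s_0\in F$ then the play is trivially in $\ReachN(\fstates)$; otherwise, setting $\action_\Ai := \strat_\Adami(s_0)$, the second condition gives some $s'\in K'$ with $\ftrans(s_0,\action_\Ei,\action_\Ai)(s')>0$, and conditioning on the first step reaching $s'$ reduces to playing $\strat'$ from $s'\in K'$ against the residual \Adam strategy, which has positive probability of reaching $F$ within $n-1$ further steps. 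Hence $\proba{s_0}{\strat}{\strat_\Adami}(\ReachN(\fstates))>0$.

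\textbf{Necessity ($\Rightarrow$).} Let $\strat$ be positively winning in $(\arena,K,\ReachN(\fstates))$. Because $K$ is a single $\requiv_\Ei$-class, $\action_\Ei := \strat([s_0]_{\requiv_\Ei})$ is well defined, and the shifted strategy $\strat'(o_1\cdots o_k) := \strat([s_0]_{\requiv_\Ei}\cdot o_1\cdots o_k)$ is an observation-based strategy. I would then take
\[
K' := \{ s'\in S \,:\, \strat' \text{ is positively winning in } (\arena,\{s'\},\ReachNP(\fstates)) \},
\]
so that the first condition holds by definition. For the second, I would reason by contradiction: if some $s\in K\setminus F$ and $\action_\Ai$ violated it, every successor $s'$ of $(s,\action_\Ei,\action_\Ai)$ would lie outside $K'$, hence would admit an \Adam counter-strategy $\strat_\Adami^{s'}$ with $\proba{s'}{\strat'}{\strat_\Adami^{s'}}(\ReachNP(\fstates))=0$. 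Using that \Adam is perfectly informed, I would stitch these into a single \Adam strategy $\strat_\Adami$ that plays $\action_\Ai$ in $s$ and then, once some successor $s'$ has been produced, switches to $\strat_\Adami^{s'}$. Decomposing the probability of $\ReachN(\fstates)$ according to the successor reached at step $1$ (and using $s\notin F$) yields $\proba{s}{\strat}{\strat_\Adami}(\ReachN(\fstates))=0$, contradicting positive winning of $\strat$ from $s\in K$.

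\textbf{Main obstacle.} The delicate step --- and the only one that genuinely exploits perfect information of \Adam --- is precisely the assembly of $\strat_\Adami$ from the family $(\strat_\Adami^{s'})_{s'}$ in the necessity direction: \Adam can select the correct continuation only because he observes which successor $s'$ was actually reached. Everything else in the argument (the shift of $\strat$, the choice of $K'$, the probability decomposition) is structural and would survive the weakening to \Adam merely ``more informed'' than \Eve, but this stitching does not, and is the point at which the next subsection's argument will need to be reformulated at the level of $\requiv_\Ai$-classes instead of individual states.
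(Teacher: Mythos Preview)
Your proof is correct: the one-step unfolding, the choice of $K'$ as the set of states from which the shifted strategy $\strat'$ positively wins $\ReachNP(\fstates)$, and the stitching of the counter-strategies $\strat_\Adami^{s'}$ using Adam's perfect information are exactly the right ingredients, and your identification of the stitching step as the only place where perfect information is essential is accurate.

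As for comparison with the paper: the proof printed under Proposition~\ref{prop:reachN-ind} in the source is \emph{not} a proof of that proposition at all. It is a verbatim duplicate of the proof of Proposition~\ref{prop:moreinftoperfect} (the reduction from the ``more informed'' case to the perfectly informed case), complete with references to $\trans$-compatible strategies, the game $\game'$, the arena $\arena'$, and the update function $Up(H,\action_\Ei,\action_\Ai)$ --- none of which have been introduced at this point and none of which bear on the statement of Proposition~\ref{prop:reachN-ind}. This is an editorial slip in the paper. Your argument is the natural, intended proof; the subsequent proof of Theorem~\ref{theo:semiGame} in the paper (which defines the rank $\mathrm{rk}(K)$ and builds the finite-memory strategy $\phi_K$ by decreasing rank) is essentially an explicit unwinding of your sufficiency direction iterated over $n$.
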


\begin{proof}
\reviewF{1}{last line of the first paragraph of the proof of Proposition 4.3: winning in $(\arena,s',\ReachN(\fstates))$, $n$ should be $n-1$}\answer{Fixed}
Fix a set $B \subseteq S$ of pairwise equivalent states and an integer $n >0$.
For the direct implication assume that \Eve has  a positively winning strategy $\strat$ in $(\arena,B,\ReachN(\fstates))$. Let $\action_{\Ei} = \strat([B]_{\requiv_\Ei})$ be the first action played by \Eve and let $\strat'$ be the strategy followed by \Eve after this first step (\ie $\strat'(\play) = \strat([B]_{\requiv_\Ei} \cdot \play)$ for every partial play $\play$). Let $B'$ be the set of states $s'$ such that $\strat'$ is positively winning in \oschanged{$(\arena,s',\ReachNP(\fstates))$}. 

\reviewF{1}{second paragraph of the proof of Proposition 4.3: and some state $s_0\in B$ such that for every $s\in S$ (should
 be $s\in B'$), remove the second "such that" $\delta(s_0,\sigma_E,\sigma_A, s)>0$ (should be equal to 0).}\answer{Fixed!}
We claim $\action_{\Ei}$ and $B'$ satisfy the property of the statement. First and by definition $\strat'$ is positively winning in $(\arena,B',\ReachNP(\fstates))$. \oschanged{For the second property assume toward a contradiction that there exist some $\action_{\Ai} \in \actions_{\Ai}$ and some state $s_{0} \in B$ such that for every $s \in B'$, $\delta(s_{0},\action_{\Ei},\action_{\Ai},s)=0$ then $\strat'$ is not positively winning in $(\arena,s,\ReachNP(\fstates))$ (\ie there exists a strategy $\psi_{s}$ of \Adam  such that $\proba{s}{\strat'}{\psi_{s}}(\ReachNP(\fstates)) = 0$).} Consider the strategy $\psi$ for \Adam consisting in playing first $\sigma_{\Ai}$ and then the $\psi_{s}$ corresponding to the observed state $s$
(\ie $\psi(s_{0})=\sigma_{A}$ and $\psi(s_{0}(\action_\Ei,\action_\Ai)\play)=\psi_{s}(\play)$
for any partial play $\play$ starting with $s \in B'$). We have  the following contradiction:
\[
\proba{s_0}{\strat}{\psi}(\ReachN(\fstates)) = \sum_{s \in B'} \delta(s_{0},\action_{\Ei},\action_{\Ai})(s) \cdot \underbrace{\proba{s}{\strat'}{\psi_{s}}(\ReachNP(\fstates))}_{=0} = 0.  
\]

For the converse implication assume that there exists an action $\action_{\Ei} \in \actions_{\Ei}$ and a set $B' \subseteq S$ satisfying the properties of the statement. Fix a positively winning strategy $\strat'$ for \Eve in $(\arena,B',\Reach^{\leq n-1}\!(\fstates))$.

Consider the strategy $\strat$ for \Eve in $(\arena,B,\ReachN(\fstates))$ consisting of first playing $\action_{\Ei}$ and then following $\strat'$ (\ie $\strat([B]_{\requiv_\Ei})=\action_{\Ei}$ and $\strat([B]_{\requiv_\Ei}\cdot [\play]_{\requiv_\Ei})=\strat'([\play]_{\requiv_\Ei})$ for every partial play $\play$). We claim that this strategy is positively winning in $(\arena,B,\ReachN(\fstates))$.

Indeed, let $\psi$ be a strategy for \Adam in $(\arena,B,\ReachN(\fstates))$ and let $s_{0} \in B$. Let $\sigma_{\Ai}$ be the first action played by \Adam when using $\psi$ and let $\psi'$ be the strategy followed by \Adam after this first step (\ie $\psi'(\play) = \psi(s_0(\action_\Ei,\action_\Ai) \cdot \play)$ for every partial play $\play$). By definition of $B'$, there exists $s' \in B'$ such that $\delta(s_{0},\action_{\Ei},\action_{\Ai})(s')>0$. We have:
 \[
\proba{s_0}{\strat}{\psi}(\ReachN(\fstates)) \geq  \delta(s_{0},\action_{\Ei},\action_{\Ai})(s') \cdot \proba{s'}{\strat'}{\psi'}(\ReachNP(\fstates)) >0.
\]
\end{proof}

Now, consider the increasing family of sets $(\mathcal{W}_{i})_{i\geq 0}$ defined by:  
\begin{itemize}
\item $\mathcal{W}_{0} = \{ B  \mid B \subseteq F \}$
\item $\mathcal{W}_{i+1}= \{ B \subseteq S \mid \forall r \in B,\ \exists \action_\Ei\  \exists B'\in \mathcal{W}_i \  \text{ s.t. } \forall s\requiv_\Ei r \text{ with } s {\in B \setminus F},\ \forall \action_\Ai\ \exists s'\in B' \text{ s.t. } \delta(s,\action_\Ei,\action_\Ai)(s')>0\}$
\end{itemize}
and call $\mathcal{W}$ its limit. Then the following is a simple consequence of Proposition~\ref{prop:reachN-ind}.

\begin{theorem}\label{theo:semiGame}
Let $B\subseteq\states$ {be a non-empty set}. \Eve has a positively winning strategy $\strat$ in the game $(\arena,B,\fstates)$ if and only if  $B\in\mathcal{W}$. In particular it can be decided in time exponential in $|S|$ whether \Eve has a positively winning strategy. {If such a strategy exists, one can construct one that uses the set $2^\states$ as memory, and this strategy guarantees to positively reach a final state within the $2^{|\states|}$ first moves.}
\end{theorem}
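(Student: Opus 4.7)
The plan is to prove by induction on $i$ that $K\in \mathcal{W}_i$ if and only if \Eve has a positively winning strategy in the bounded reachability game $(\arena,K,\Reach^{\leq n_i}(F))$, where $n_i$ is the appropriate number of moves corresponding to $i$ (up to an off-by-one), and then to combine this with Proposition~\ref{proposition:positiveWinningReachability} to lift the result to the unbounded reachability objective. The only genuinely new step compared with Proposition~\ref{prop:reachN-ind} is to pass from single-class $K$ to arbitrary $K$. This is essentially free: \Eve's very first move depends only on $[s_0]_{\requiv_\Ei}$, so she can play an independent strategy for each $\requiv_\Ei$-class intersecting $K$; dually, the quantification ``$\forall s\requiv_\Ei r,\, s\in K\setminus F$'' in the definition of $\mathcal{W}_{i+1}$ only sees states in $[r]_{\requiv_\Ei}$, so $K\in \mathcal{W}_{i+1}$ is equivalent to $K\cap C\in \mathcal{W}_{i+1}$ for every $\requiv_\Ei$-class $C$.

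The base case is immediate because $\mathcal{W}_0=\{K:K\subseteq F\}$ captures exactly the situation in which every possible initial state chosen by \Adam is already final. In the inductive step, after reducing to a single class $K\cap C$ as above, I would apply Proposition~\ref{prop:reachN-ind}: \Eve positively wins $(\arena,K\cap C,\Reach^{\leq n_{i+1}}(F))$ iff there exist an action $\action_\Ei$ and a witness set $K'\subseteq \states$ with the prescribed transition property such that \Eve positively wins $(\arena,K',\Reach^{\leq n_i}(F))$. By the induction hypothesis applied to $K'$ (which may itself span several classes), this last condition is exactly $K'\in \mathcal{W}_i$, which is precisely the clause defining $\mathcal{W}_{i+1}$.

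To finish, Proposition~\ref{proposition:positiveWinningReachability} yields that if \Eve positively wins the unbounded reachability game then she positively wins some bounded instance $\Reach^{\leq n_N}(F)$, hence $K\in \mathcal{W}_N\subseteq \mathcal{W}$; the converse inclusion is trivial. The sequence $(\mathcal{W}_i)_i$ is monotone increasing (short induction: if $\mathcal{W}_{i-1}\subseteq \mathcal{W}_i$, any witness $K'\in \mathcal{W}_{i-1}$ used at stage $i$ also lies in $\mathcal{W}_i$ and hence witnesses stage $i+1$), and since each $\mathcal{W}_i$ is a subset of $2^\states$ (which has $2^{|\states|}$ elements) the chain stabilises after at most $2^{|\states|}$ iterations; each step is computable in time polynomial in $2^{|\states|}$, giving the claimed \exptime bound. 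The corresponding strategy uses as memory the current ``witness'' subset $K'\in 2^\states$: initialised to $K$, updated after each observation to the $K'$ prescribed by the $\mathcal{W}_{i+1}$-clause for the observed class; since the fixpoint is reached at stage $2^{|\states|}$, this strategy positively reaches $F$ within $2^{|\states|}$ moves. The main obstacle I expect is careful bookkeeping of the class-by-class reduction and of the correspondence between $\mathcal{W}_i$ and $\Reach^{\leq n_i}(F)$; everything else follows directly from the two earlier propositions.
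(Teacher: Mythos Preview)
Your proposal is correct and follows essentially the same approach as the paper: both argue by induction that $K\in\mathcal{W}_n$ iff \Eve positively wins $(\arena,K,\ReachN(F))$ via Proposition~\ref{prop:reachN-ind}, then build the finite-memory strategy by tracking the current witness set in $2^\states$ and always moving to a witness $K'$ of strictly smaller rank. You are in fact a bit more explicit than the paper on two points it leaves implicit---the class-by-class reduction that extends Proposition~\ref{prop:reachN-ind} from $\requiv_\Ei$-homogeneous $K$ to arbitrary $K$, and the use of Proposition~\ref{proposition:positiveWinningReachability} to pass from the unbounded objective back to some bounded $\ReachN(F)$---but these are just details the paper glosses over rather than a different route.
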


\begin{proof}
Using Proposition~\ref{prop:reachN-ind}, by a direct induction on $n$ one gets that $B\neq \emptyset$ belongs to $\mathcal{W}_{n}$ if and only if \Eve positively wins $(\arena,B,\ReachN(\fstates))$. 

For any $B \in \mathcal{W}$, we denote by $\mathrm{rk}(B)$ the smallest $n$ such that $B \in \mathcal{W}_n$. Now, for any $B \in \mathcal{W}$, we define a strategy for \Eve denoted $\phi_{B}$ that uses $\mathcal{W}$ as a finite memory. Initially the memory is $B$. For a partial play $\play$ ending in a state in some equivalence class $[s]_{\requiv_{\Ei}}$ and assuming that the memory is $B'$, we define the strategy as follows:
\begin{itemize}
\item If $\mathrm{rk}(B')>0$ and if there exists \oschanged{$r \in [s]_{\sim_{\Ei}} \cap B'$}, {then by definition of $(\mathcal{W}_i)_{i\geq 0}$ there exists some action $\action_{\Ei}$ and some set $B''$ such that the following holds:}\reviewF{1}{$r \in [s]_{\sim_{\Ei}} \cap B$ (should be $B'$)}\answer{fixed}
\begin{itemize}
\item $\mathrm{rk}(B'')=\mathrm{rk}(B')-1$,
\item $\forall r'\requiv_\Ei r \text{ with } r' \in B \setminus F, \forall \action_\Ai,\ \exists s'\in B'', \text{ s.t. } \delta(s,\action_\Ei,\action_\Ai)(s')>0$.
\end{itemize}
{Then we let $\phi_{B}(\play)=\action_{\Ei}$ and update the memory to $B''$.}
\item In all other cases, we take $\phi_{B}(\play)$ to be an arbitrary action and 
update the memory to $\emptyset$.
\end{itemize}

By induction on $n$, we show that for every non-empty $B \in \mathcal{W}_{n}$,
the strategy $\varphi_{B}$ is positively winning in $(\arena,B,\ReachN(\fstates))$. The base case is immediate. Assume that the property is established for $n-1\geq0$.
Let $B$ be a non-empty element of {$\mathcal{W}_{n}$}. Let $s_{0} \in B$, $\sigma_{\Ei}=\phi_{B}([s_{0}]_{\requiv_{\Ei}})$ and $B' \in \mathcal{W}_{n-1}$ be 
the memory of $\phi_{B}$ after the first move.
Let $\psi$ be a strategy for \Adam. Let $\sigma_{\Ai}$ be the first action played by \Adam when using $\psi$ and let $\psi'$ be the strategy followed by \Adam after this first step, \ie $\psi'(\play) = \psi(s_0 \cdot {(\action_\Ei,\action_\Ai)\cdot}\play)$ for every partial play $\play$. By definition of $B'$, there exists $s' \in B'$ such that $\delta(s_{0},\action_{\Ei},\action_{\Ai})(s')>0$. Hence, we have:
 \[
\proba{s_0}{\phi_{B}}{\psi}(\ReachN(\fstates)) \geq  \delta(s_{0},\action_{\Ei},\action_{\Ai})(s') \cdot \proba{s'}{\phi_{B'}}{\psi'}(\ReachNP(\fstates)) >0.
\]
{which concludes the proof.}
\end{proof}

The following is a restatement of the end  of Theorem~\ref{theo:semiGame}.
\begin{corollary}\label{cor:boundingN}
In Proposition \ref{proposition:positiveWinningReachability}, when \Adam is perfectly informed, one can always choose $\strat$ such that $N\leq 2^{|\states|}$.
\end{corollary}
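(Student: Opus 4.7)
The plan is to read off the bound directly from the fixpoint construction used to prove Theorem~\ref{theo:semiGame}, rather than starting Proposition~\ref{proposition:positiveWinningReachability} from scratch. That proposition gives existence of \emph{some} $N$, but the compactness argument there is non-constructive, so the quantitative bound must instead come from the rank in the hierarchy $(\mathcal{W}_i)_{i\geq 0}$.

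First I would recall that each $\mathcal{W}_i$ is a subset of $2^{\states}$ and that the sequence is monotonically increasing by construction. Since $|2^{\states}|=2^{|\states|}$, the sequence must stabilise at some index $n^{\star}\leq 2^{|\states|}$, so $\mathcal{W}=\mathcal{W}_{n^{\star}}$. Consequently, for every $K\in\mathcal{W}$ the rank $\mathrm{rk}(K)$ defined in the proof of Theorem~\ref{theo:semiGame} satisfies $\mathrm{rk}(K)\leq 2^{|\states|}$.

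Next I would revisit the strategy $\phi_{K}$ built in the proof of Theorem~\ref{theo:semiGame}. The inductive argument there actually shows more than positive reachability: it shows, by induction on $n=\mathrm{rk}(K)$, that against any strategy $\psi$ of Adam one has
\[
\proba{s_0}{\phi_K}{\psi}\bigl(\Reach^{\leq n}(\fstates)\bigr)>0
\qquad\text{for every } s_0\in K.
\]
Composing this with the observation of the previous paragraph yields $\proba{s_0}{\phi_K}{\psi}(\Reach^{\leq 2^{|\states|}}(\fstates))>0$ for every Adam strategy $\psi$ and every starting state $s_0\in K$.

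Finally, I would close the loop with Proposition~\ref{proposition:positiveWinningReachability}: since there are only finitely many Adam strategies restricted to the first $2^{|\states|}$ moves, the infimum over $\psi$ of $\proba{s_0}{\phi_K}{\psi}(\Reach^{\leq 2^{|\states|}}(\fstates))$ is a positive minimum, giving the required uniform $\epsilon_K>0$. The only subtlety (not really an obstacle) is that Proposition~\ref{proposition:positiveWinningReachability} applies to an arbitrary positively winning strategy, whereas the bound $2^{|\states|}$ is attained by the specific strategy $\phi_K$; this is why the corollary is phrased as ``one can always \emph{choose} $\strat$'' rather than as a bound valid for every positively winning strategy.
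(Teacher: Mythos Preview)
Your proposal is correct and is essentially the same approach as the paper's: the paper does not give a separate proof of the corollary but simply states it as ``a restatement of the end of Theorem~\ref{theo:semiGame}'', and you have spelled out exactly why that restatement works---namely, that the hierarchy $(\mathcal{W}_i)_i$ stabilises within $2^{|\states|}$ steps, so the rank-based strategy $\phi_K$ from the proof of Theorem~\ref{theo:semiGame} positively reaches $F$ within $2^{|\states|}$ moves.
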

%
%\begin{proof}
%Let $B\subseteq S$ be a subset of states and assume that \Eve has a strategy $\strat$ that is positively winning in the reachability game $(\arena,B,F)$. Then thanks to  Lemma~\ref{theo:semiGame}, we know that \Eve wins in $\semiGame$ from $B$. From the proof of the converse implication of  Lemma~\ref{theo:semiGame} we construct a new strategy $\strat'$ that is positively winning in the reachability game $(\arena,B,F)$. Moreover, as noted in the proof of the converse implication of  Lemma~\ref{theo:semiGame}, if \Eve respects $\strat'$ the probability that the resulting play visits a final state within the $2^{|\states|}$ first moves is positive (\emph{i.e.} at least some fixed $\epsilon$).
%\end{proof}

%++++++++++++++++++++++++++++++++++++++++++++++++++++++++++++++++++++++
%++++++++++++++++++++++++++++++++++++++++++++++++++++++++++++++++++++++

\subsection{Automaton-Compatible Strategies}\label{sec:automaton-driven}

The aim of this section is to refine Theorem~\ref{theo:semiGame} to positively winning strategies that satisfy further constraints. The motivation is that in Section~\ref{sec:as} we compute almost-sure winning strategies for B\"uchi conditions using a fixpoint computation. In one iteration of this computation, we compute positively winning strategies for reachability that satisfy an extra constraint (roughly, that \Eve can positively win the reachability game while ensuring that she can win another round of the reachability game once the target set is reached). This further constraint is expressible by finite automata that read partial plays and restrict the set of admissible next actions for \Eve. Thus, below we develop the notion of a strategy that is compatible with such an automaton and then later apply it to the specific setting that we need.

Let $\trans = ( Q,\actions_\Ei\times\states_{/_{\requiv_\Ei}},q_0,q_s,\deltaT,\act) $ be a deterministic finite automaton with input alphabet $\actions_\Ei\times\states_{/_{\requiv_\Ei}}$, a finite set of states $Q$, an initial state $q_0$, a sink state $q_s$, a transition function $\deltaT:Q\times (\actions_\Ei\times\states_{/_{\requiv_\Ei}})\rightarrow Q$  and a function $\act:Q\rightarrow 2^{\actions_\Ei}$ associating with any state of $\trans$ a subset of actions for \Eve. Moreover, we require that the following holds
\begin{itemize}
\item  $\act(q)=\emptyset$ if and only if $q=q_s$. 
\item For every state $q$ and for every $(\action,x)\in \actions_\Ei\times\states_{/_{\requiv_\Ei}}$ one has $\deltaT(q,(\action,x))=q_s$ if and only if $\action\notin\act(q)$.
\end{itemize}

Such a machine associates with any partial play $\play$ a unique state $q_\play$ defined by $q_{s_0}=q_0$ and $q_{\play\cdot (\action_\Ei,\action_\Ai)\cdot s}= \deltaT(q_\play,(\action_\Ei,[s]_{\requiv_\Ei}))$; 
it also permits to associate with any partial play a subset of actions by letting $\actT(\play) = \act(q_\play)$. 

A strategy $\strat$ of \Eve is \defin{$\trans$-compatible} if for every partial play $\play$ where \Eve respects $\strat$ one has $\strat(\play)\in\actT(\play)$. Note that it implies that $q_\play\neq q_s$.

\reviewF{1}{the intention of this remark is clear but the remark itself is not, e.g. it seems like it's
 assuming that either the sink state is reached or all transitions from the initial state go back to it}\answer{We rephrased the remark below}
 
%{ \begin{remark}\label{rk:transTrivial}
% In case $\trans$ consists of an initial state and a non reachable sink state (\ie all transitions from the initial state goes back to it) and $\act$ equals all actions $\actions_\Ei$ in the initial state, one has that any strategy is $\trans$-compatible. Hence, any result we obtain later will also hold if we drop the $\trans$-compatibility constraint.
% \end{remark}
%}

\oschanged{ \begin{remark}\label{rk:transTrivial}

 Consider the special case of the automaton $\trans_0$ defined as follows: $Q$ consists only of two states, the initial state and the sink state; $\deltaT(q_0,(\action,x))=q_0$ and $\deltaT(q_s,(\action,x))=q_s$ for any $(\action,x)\in\actions_\Ei\times\states_{/_{\requiv_\Ei}}$ (\ie all transitions are looping); and $\act$ equals all actions $\actions_\Ei$ in the initial state. Then it follows that any strategy is $\trans_0$-compatible. 
 
 Hence, by considering the special case of $\trans_0$, any result we obtain later will also hold if we drop the $\trans$-compatibility constraint.
 \end{remark}
}

In Section~\ref{sec:as} and for the proof of Theorem~\ref{theo:posTcom}, we work with automata that compute the belief of \Eve along a play, as explained below.
{For an initial belief set $B_0 \subseteq S$ of pairwise $\requiv_\Ei$-equivalent states,} the belief (also known as knowledge) $\belief{B_0}{\Ei}(\lambda)$ of \Eve after a partial play $\lambda$ starting in a state of $B_{0}$, intuitively corresponds to the set of possible states that can have been reached in a play $\requiv_\Ei$-equivalent to $\play$.  

Formally, the value of $\belief{B_0}{\Ei}(\lambda)$ can be inductively defined as follows: $\belief{B_0}{\Ei}(s_0)=B_0$ and $\belief{B_0}{\Ei}(\lambda \cdot(\action_\Ei,\action_\Ai)\cdot s)=
\updateB{\Ei}(\belief{B_0}{\Ei}(\lambda),\action_\Ei,[s]_{\requiv_\Ei})$
where the function $\updateB{\Ei}:2^\states\times\actions_\Ei\times [\states]_{/_{\requiv_\Ei}}\rightarrow 2^{{\states}}$ is defined by:
\[
\updateB{\Ei}(B,\action_\Evei,[s]_{\requiv_\Ei}) =  \{t \in [s]_{\requiv_\Ei}\mid\exists r\in B,\  \exists\action_\Adami\in \actions_\Adami\text{ s.t. }\ftrans(r,\action_\Evei,\action_\Adami)(t)>0\}.
\]

{

\begin{figure}[htb]
\centering
    \subfloat[Arena of Remark~\ref{ex:belief-smaller-than-observation}]
        {\begin{tikzpicture}[transform shape, scale=1]
\begin{scope}
\tikzstyle{every loop}=[->,shorten >=1pt,looseness=7,]
\tikzstyle{loop top}=[in=55,out=125,loop]
\node[draw,circle,double] (q_0) at (0,0) {$q_{0}$};

\node[draw,circle] (q_f) at (2.5,0) {$q_{1}$};

\draw[->,>=latex] (q_0) to[bend left=25]  node[above][yshift=1mm] {$b\barre b$}
node[above][yshift=5mm] {$a \barre \any$ }(q_f);
\draw[->,>=latex] (q_f) to[bend left=25]  node[below][yshift=-1mm] {$b\barre \any$}
(q_0);
\path (q_0) edge [loop top] node[above] {$b\barre a$} (q_0);
\path (q_f) edge [loop top] node[above] {$a\barre\any$} (q_f);

\draw[dashed] (-0.5,0.5) rectangle (3,-0.5);

\end{scope}
\end{tikzpicture}        \label{fig:ex-2}
}
\hspace{2cm}
\subfloat[Arena of Remark~\ref{rk:beliefNotEnough}]{
\begin{tikzpicture}[transform shape,scale=1]
\begin{scope}
\tikzstyle{every node}=[minimum size=8mm]
\tikzstyle{every loop}=[->,shorten >=1pt,looseness=7,]
\tikzstyle{loop top}=[in=55,out=125,loop]

\node[draw,circle] (s0) at (0,0) {$s_0$};

\node[inner sep=0mm,minimum size=0mm] (int) at (0,-0.75) {};

\node[draw,circle] (s1) at (1.5,-1.5) {$s_2$};
\node[draw,circle] (s2) at (-1.5,-1.5) {$s_1$};
\node[draw,circle] (t1) at (2.5,-3) {$t_2$};
\node[draw,circle,double] (f1) at (0.5,-3) {$f_2$};

\node[draw,circle] (t2) at (-2.5,-3) {$t_1$};
\node[draw,circle,double] (f2) at (-0.5,-3) {$f_1$};
\draw (s0) to node[right][xshift=-2mm] {$\any$} (int);

\draw[->,>=latex,bend right] (int) to node[xshift=-2.5mm,yshift=-1mm] {$\frac{1}{2}$} (s1);
\draw[->,>=latex,bend left] (int) to node[xshift=2.5mm,yshift=-1mm] {$\frac{1}{2}$} (s2);

\draw[->,>=latex,bend left] (s1) to node[below right][yshift=2mm] {$a$} (t1);
\draw[->,>=latex] (t1) to node[left][yshift=-2mm,xshift=2.5mm] {$\any$} (s1);
\draw[->,>=latex,bend right] (s1) to node[below left][yshift=2mm]  {$b$} (f1);
\draw[->,>=latex] (f1) to  node[right][yshift=-2mm,xshift=-2.5mm] {$\any$} (s1);

\draw[->,>=latex,bend left] (s2) to node[below right][yshift=2mm] {$a$} (f2);
\draw[->,>=latex] (f2) to node[left][yshift=-2mm,xshift=2.5mm]  {$\any$} (s2);
\draw[->,>=latex,bend right] (s2) to node[below left][yshift=2mm] {$b$} (t2);
\draw[->,>=latex] (t2) to  node[right][yshift=-2mm,xshift=-2.5mm] {$\any$} (s2);
\end{scope}
\end{tikzpicture}
}
\caption{Arenas and beliefs}\label{fig:rk4}
\end{figure}
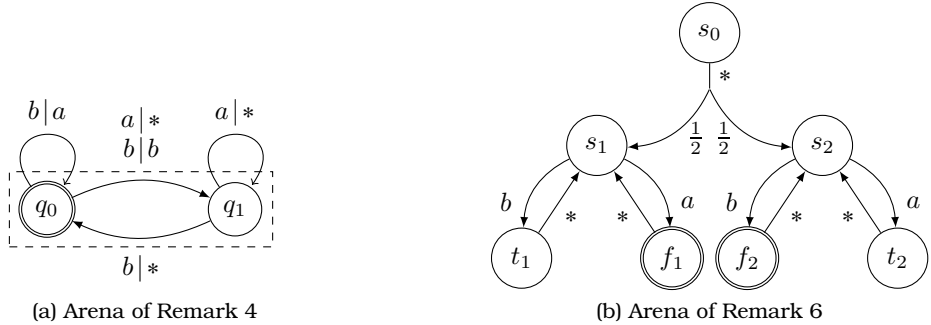

\begin{remark}\label{ex:belief-smaller-than-observation} 
The belief is in general smaller than the currently observed equivalence class. For instance, consider the reachability game depicted in Figure~\ref{fig:ex-2} in which all   states are equivalent. If the strategy of \Eve is to play  $(abb)^{\omega}$, then her observation is always the same (as all states are equivalent). Her initial belief is $\{q_{0},q_{1}\}$  but after playing $a$ it becomes $\{q_{1}\}$ and after a $b$ it  becomes $\{q_{0}\}$ and after another $b$ it becomes  $\{q_{0},q_{1}\}$.
  \end{remark}
}

{
\begin{remark}\label{rk:knowtrans}
Given a family $\mathcal{B} \subseteq 2^S$ of beliefs for \Eve (in the sense that each $B \in \mathcal{B}$ is a subset of a $\requiv_\Ei$-class), one can construct an automaton $\trans_{\mathcal{B}}$ such that the $\trans_{\mathcal{B}}$-compatible strategies are precisely those such that \Eve's belief always remains inside $\mathcal{B}$.
The states of $\trans_{\mathcal{B}}$ are the elements of $\mathcal{B}$, the transition function is defined by $\updateB{\Ei}$, and the actions $\act(B)$ enabled at a state $B$ are those that ensure that the  belief remains inside $\mathcal{B}$.
\end{remark}
}

{\begin{remark}\label{rk:beliefNotEnough}
  In~\cite{GS09,BGG09,BertrandGG17}, it is shown that if \Eve can almost-surely win (using randomised strategies) a B\"uchi game, she can do so using a strategy $\phi$ that only depends on the belief, \emph{i.e.} $\phi(\lambda) = \phi(\lambda')$ whenever $\belief{\phi}{}(\lambda) = \belief{\phi}{}(\lambda')$.  However, even if \Eve is playing alone, this is no longer true\footnote{This fact is also observed in~\cite{CD}.} (even for reachability games) in our setting where we restrict to pure (\ie non-randomised) strategies. Consider the reachability game in Figure~\ref{fig:rk4} where \Eve is playing alone. The equivalence relation is given by $s_{1} \requiv_\Ei s_{2}$, $t_1\requiv_\Ei f_2$ and  $t_2 \requiv_\Ei f_1$.

If the game starts in $s_{0}$ then whatever  strategy \Eve uses, her belief always coincides with her observation. Eve can surely win (she can simply play the sequence $aaab$). But if her  strategy only depends on her belief then she necessarily plays a sequence of actions of the form $xu^\omega$ where $x\in\{a,b\}$ and $u$ is a two-letter word, and thus she has a probability $\frac{1}{2}$ to win using such a strategy.

\end{remark}
}

We now return to the strengthening of Theorem~\ref{theo:semiGame}. We assume that \Adam is perfectly informed and we fix an automaton $\trans = ( Q,\actions_\Ei\times\states_{/_{\requiv_\Ei}},q_0,q_s,\deltaT,\act) $ as in Section~\ref{sec:automaton-driven}. We are interested in checking whether \Eve has a $\trans$-compatible strategy that is positively winning in the reachability game $(\arena,B,F)$. 

Our main result is the following and its proof is by two successive reductions and an application of Theorem~\ref{theo:semiGame}.

%++++++++++++++++++++++++++++++++++++++++

\begin{theorem}\label{theo:posTcom}
When \Adam is perfectly informed, one can decide in exponential time in $|S|$ and polynomial in $|Q|$ whether \Eve has a $\trans$-compatible strategy that is positively winning in the reachability game $(\arena,B,F)$. {If such a strategy exists, one can construct one that uses memory of size polynomial in $|Q|$ and exponential in $|S|$.}
\end{theorem}

\begin{proof}
Note that adding the condition on the strategy being $\trans$-compatible somehow means that once a final state is reached the play is not yet won by \Eve because she needs to keep playing in accordance with $\trans$ (\ie she must avoid to produce a partial play $\play$ with $q_\play=q_s$). Hence, it is natural to consider an enriched arena $\arena_\trans$ that embeds $\trans$.
For this let $\arena_\trans = (\states\times Q,\actions_\Evei,\actions_\Adami,\ftrans',\rdoubleequiv_\Ei,\rdoubleequiv_\Ai)$ where
\begin{itemize}
\item $\delta'((s,q),\action_\Ei,\action_\Ai)(s',q')$ equals $\delta(s,\action_\Ei,\action_\Ai)(s')$ if $q'=\deltaT(q,(\action_\Ei,[s']_{\requiv_\Ei}))$ and otherwise it equals $0$;
\item $(s_1,q_1)\rdoubleequiv_\Ei (s_2,q_2)$ if and only if $s_1\requiv_\Ei s_2$ and $q_1=q_2$; and
\item $\rdoubleequiv_\Ai$ is the equality relation, \ie \Adam is perfectly informed.
\end{itemize}

Of special interest is the safety game $(\arena_\trans,B\times\{q_0\},S\times \{q_s\})$ and we are interested in sure winning for \Eve because of the following straightforward lemma 

\begin{lemma}\label{lemma-red1}
\Eve has a (possibly losing) $\trans$-compatible strategy in the reachability game $(\arena,B,F)$ if and only if she has a surely winning strategy in the safety game $(\arena_\trans,B\times\{q_0\},S\times \{q_s\})$.
\end{lemma}

It is a known result~\cite{Berwanger08} that when one considers sure winning for \Eve in a safety game, winning strategies only depend on the belief of  \Eve (in the sense of Section~\ref{sec:automaton-driven}). More precisely consider the (unique) largest subset $\mathcal{B}$ of beliefs and the (unique) mapping $Aut:\mathcal{B}\rightarrow 2^{\actions_\Ei}$ such that the following holds.
\begin{itemize}
\item No belief $B\in\mathcal{B}$ contains a forbidden state.
\item For every $B\in \mathcal{B}$, the set $Aut(B)$ {which} consists of all those actions $\action_\Ei\in Aut(B)$ such that for every action $\action_\Ai\in \actions_\Ai$ one has \oschanged{$\updateB{\Ei}(B,\action_\Ei,[s]_{\requiv_\Ei})\in\mathcal{B}$}, {is not empty}; \ie actions in $Aut(B)$ are those that ensure that the updated belief will still be in $\mathcal{B}$ regardless of the action of \Adam.
\reviewF{1}{why add $\cup {\emptyset}$, while also
 saying that it is not empty?}\answer{Right We removed the $\cup\emptyset$}
\end{itemize}
Then \Eve surely wins the safety game from configurations where her belief $B$ is in $\mathcal{B}$ and a strategy consists in choosing any action in $Aut(B)$.

Note that in the safety game $(\arena_\trans,B\times\{q_0\},S\times \{q_s\})$, \Eve's beliefs are elements in $2^S\times Q$ (as we have that $(s_1,q_1)\rdoubleequiv_\Ei (s_2,q_2)$ implies  $q_1=q_2$).

\reviewF{1}{consider renaming the notion of "belief" to "belief", that way you avoid unpleasant
  sentences like "all beliefs"}\answer{We did that. Also changing all $K$/$\mathcal{K}$ to $B$/$\mathcal{B}$}
%\cl{This one made me think a bit. But you really replaced all ``knowledge'' by belief :)}

Now consider an automaton $\trans'=( Q',\actions_\Ei\times\states_{/_{\requiv_\Ei}},q_0',q_s',\deltaTp,\act') $ that \emph{computes} \Eve's belief (as explained in Remark~\ref{rk:knowtrans}. {Hence, $\trans'$ is the same as $\trans_{\mathcal{B}}$}) in the previous safety game and uses function $Aut=\act'$ to define those authorised actions. To fit the definition, merge all beliefs not in $\mathcal{B}$ in a sink state and define $Aut$ to be equal to $\emptyset$ on it. The states $Q'$ of $\trans'$ are elements of $\mathcal{B}$ (plus the sink state) and one takes as the initial state $q_0'=B\times\{q_0\}$ (which possibly is the sink state). In particular the number of states of $\trans'$ is exponential in $|S|$ and linear in $|Q|$.

Now one can go back to the original arena and consider the enriched arena $\arena_{\trans'}$. Then we have the following easy lemma.

\begin{lemma}\label{lemma-red2}
\Eve has a $\trans$-compatible positively winning strategy in the reachability game $(\arena,B,F)$ if and only if she has a positively winning strategy in the reachability game $(\arena_{\trans'},B\times\{q_0'\},F\times (Q'\setminus\{q_s'\}))$. 

Moreover, from a positively winning strategy in the second game using memory of size $N$ one can effectively construct a  $\trans$-compatible positively winning strategy in the reachability game $(\arena,B,F)$ that uses a memory of size $\mathcal{O}(N\times 2^{|S|}\times |Q|)$.
\end{lemma}

\begin{proof}
If Eve positively wins in $(\arena_{\trans'},B\times\{q_0'\},F\times (Q'\setminus\{q_s'\}))$ then we can safely assume that she necessarily always plays authorised (according to $\act'$) actions (otherwise the play goes directly to $S\times \{q_s'\}$ and gets trap in it forever, hence cannot reach $F\times (Q'\setminus\{q_s'\}$), hence is $\trans$-compatible thanks to Lemma~\ref{lemma-red1}. Such a strategy can be mimicked in the original game and it requires to simulate automaton $\trans'$ hence, costs an extra memory of size the one of $\trans'$. Conversely, if it she has a positively winning $\trans$-compatible strategy in the original game, the same strategy can be mimicked in the reduced game and is still positively winning.
\end{proof}

Now combining Lemma~\ref{lemma-red2} together with Theorem~\ref{theo:semiGame} concludes the proof of Theorem~\ref{theo:posTcom}.
\end{proof}

%++++++++++++++++++++++++++++++++++++++++++++++++++++++++++++++++++++++
%++++++++++++++++++++++++++++++++++++++++++++++++++++++++++++++++++++++

\subsection{The Case Where  \Adam Is More Informed Than \Eve}\label{subsection:AdamMoreInformed}

\reviewF{1}{explicitly state that when Adam has imperfect information he still observes the actions}\answer{We added a sentence referring to Remark~\ref{rk:action-visible}.}

We now assume that \Adam is more informed than \Eve and we fix an automaton $\trans = ( Q,\actions_\Ei\times\states_{/_{\requiv_\Ei}},q_0,q_s,\deltaT,\act) $ as in Section~\ref{sec:automaton-driven}. Again, we are interested in checking whether \Eve has a $\trans$-compatible strategy that is positively winning in the reachability game $(\arena,B,F)$. The idea here is to reduce this question to one on a game where \Adam is perfectly informed and therefore conclude thanks to Theorem~\ref{theo:posTcom}.

\oschanged{Recall that in this setting, as noted in Remark~\ref{rk:action-visible}, we can safely assume that, against a fixed strategy of \Eve, \Adam observes the actions played by both players.}

For this let $\mathcal{H}$ be all those subsets of $\states$ that consist of $\requiv_\Ai$-equivalent states. 
For such a subset $H$ and for any pair of actions $(\action_\Ei,\action_\Ai)\in(\actions_\Ei\times \actions_\Ai)$ define the set $Up(H,\action_\Ei,\action_\Ai)\in \mathcal{H}$ as follows. First, define $M=\{s'\in S\mid \exists s\in H \text{ s.t. } \ftrans(s,\action_\Ei,\action_\Ai)(s')>0)\}$ as the set of all possible successors of states in $H$ when playing the pair of actions $(\action_\Ei,\action_\Ai)$ and let $Up(H,\action_\Ei,\action_\Ai)$ consist of all those non-empty subsets $H'$ that can be written as $H'=M\cap [s]_{\requiv_\Ai}$, \ie all possible indistinguishable (for \Adam) subsets of $M$.

Define now a new arena $\arena' = (\mathcal{H},\actions_\Evei,\actions_\Adami,\ftrans',\rdoubleequiv_\Ei,\rdoubleequiv_\Ai)$ by letting 
\begin{itemize}
\item $\ftrans'(H,\action_\Ei,\action_\Ai)(H') = 1/|Up(H,\action_\Ei,\action_\Ai)|$ if $H'\in Up(H,\action_\Ei,\action_\Ai)$ and $0$ otherwise;
\item $H_1\rdoubleequiv_\Ei H_2$ if $s_1\requiv_\Ei s_2$ for every $s_1\in H_1$ and $s_2\in H_2$; and
\item $\rdoubleequiv_\Ai$ is the equality relation, \ie \Adam is perfectly informed.
\end{itemize}
Define the set of final states $F'$ as those elements $H$ in $\mathcal{H}$ such that $H\cap F\neq \emptyset$. 

Note that the equivalence classes of $\rdoubleequiv_\Ei$ can be identified with the equivalence classes of $\requiv_\Ei$ (because $\requiv_\Ai \subseteq \requiv_\Ei$) and therefore one can define $\trans$-compatible strategies for \Eve also in a play in $\arena'$. More generally, any \Eve's strategy in one game can be used in the other one.

For a set $B\subseteq \states$ define $\nu(B)\in \mathcal{H}$ as $\nu(B)=\{\{s\}\mid s\in B\}$.
The following proposition relates game $(\arena,B,F)$ and game $(\arena',\nu(B),F')$.

\begin{proposition}\label{prop:moreinftoperfect}
A strategy of \Eve is a positively winning $\trans$-compatible strategy in $\game=(\arena,B,F)$ if and only if it is a positively winning $\trans$-compatible strategy in $\game'=(\arena',\nu(B),F')$. 
\end{proposition}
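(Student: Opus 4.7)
The plan is to build a tight correspondence between Adam's strategies in $\game$ and $\game'$ and transfer positive winning both ways. Eve can use the very same $\strat$ in both games because the $\requiv_\Ei$-classes in $\arena$ are in bijection with the $\rdoubleequiv_\Ei$-classes in $\arena'$; in particular $\trans$-compatibility has the same meaning in both games and requires no further treatment. The core tool is a canonical map from $\arena$-plays to $\arena'$-plays: given $\lambda = s_0 \alpha_0 s_1 \alpha_1 \cdots s_k$ starting in $K$ with $\alpha_i = (\action_\Ei^i, \action_\Ai^i)$, set $H_0 = \{s_0\}$ and inductively take $H_{i+1}$ to be the unique element of $Up(H_i,\alpha_i)$ containing $s_{i+1}$ (unique because distinct elements of $Up(H_i,\alpha_i)$ sit in distinct $\requiv_\Ai$-classes). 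Under this map, $s_k \in F$ immediately implies $H_k \cap F \neq \emptyset$ hence $H_k \in F'$.

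For the direction $\game \Rightarrow \game'$, assume $\strat$ positively wins $\game$ and fix any $\psi'$ in $\arena'$. I would define $\psi$ in $\arena$ by having Adam maintain the current $H_i$ as finite memory: since $\requiv_\Ai \subseteq \requiv_\Ei$, Adam derives $[s_i]_{\requiv_\Ei}$ from his own $\requiv_\Ai$-observation, simulates the pure strategy $\strat$ to recover $\action_\Ei^i$ (his own $\action_\Ai^i$ being known), and then applies $Up$ to update $H_i$. He plays $\psi(\lambda) := \psi'(\lambda')$, a legitimate observation-based strategy. Any play $\lambda$ of positive probability under $(\strat,\psi)$ in $\arena$ induces a companion $\lambda'$ of positive probability under $(\strat,\psi')$ in $\arena'$ (each step is legal since $H_{i+1} \in Up(H_i,\alpha_i)$ by construction, and the actions of $\psi'$ on $\lambda'$ agree with those of $\psi$ on $\lambda$); when $\lambda$ visits $F$ so does $\lambda'$ visit $F'$, so $\strat$ positively wins $\game'$.

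For the reverse direction, assume $\strat$ positively wins $\game'$ and let $\psi$ be an Adam-strategy in $\arena$. Define $\psi'(\lambda') := \psi(\lambda)$ for any $\lambda$ whose image is $\lambda'$ (and arbitrarily on $\lambda'$ that have no lift): this is well-defined because any two lifts of $\lambda'$ agree state-by-state up to $\requiv_\Ai$, so $\psi$ returns the same value on both. By hypothesis, some finite prefix $\lambda' = H_0 \alpha_0 \cdots H_k$ of positive probability under $(\strat,\psi')$ satisfies $H_k \cap F \neq \emptyset$. I would lift $\lambda'$ to a positive-probability play $\lambda$ in $\arena$ by picking $s_k \in H_k \cap F$ and walking backwards: the definition of $Up$ guarantees that every element of $H_{i+1}$ has some predecessor in $H_i$ via $\alpha_i$, so one iteratively picks $s_i \in H_i$ with $\delta(s_i,\alpha_i)(s_{i+1}) > 0$, ending at $s_0 \in H_0 \subseteq K$. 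This $\lambda$ is consistent with $(\strat,\psi)$ and reaches $F$ with positive probability in $\arena$.

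The main obstacle is the forward direction, specifically arguing that the simulated strategy $\psi$ for Adam in $\arena$ is truly observation-based. This hinges on two hypotheses: Eve plays \emph{pure} strategies, and Adam's observation refines Eve's; together they allow Adam to reconstruct the entire sequence $(H_i)_i$ from his $\requiv_\Ai$-observations alone, without peeking at Eve's private state. Neither hypothesis can be dropped without breaking the correspondence.
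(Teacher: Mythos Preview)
Your proof is correct and follows essentially the same approach as the paper: in both directions you translate Adam's strategy between $\game$ and $\game'$ by having him maintain the $H$-component, which is possible precisely because $\requiv_\Ai \subseteq \requiv_\Ei$ and Eve plays a pure strategy. The presentation differs only cosmetically—the paper invokes Proposition~\ref{proposition:positiveWinningReachability} for the forward direction and argues the reverse by contradiction, whereas you give a direct play-to-play correspondence with an explicit backward lift—but the core idea is identical.
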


\begin{proof}
Let $\strat$ be a positively winning $\trans$-compatible strategy in $\game$. Now use $\strat$ in 
$\game'$: obviously it is still $\trans$-compatible and we only have to prove that it is positively winning. Consider a strategy $\psi'$ of \Adam in $\game'$. Then, assuming \Eve respects $\strat$, strategy $\psi'$ can be mimicked in game $\game$: indeed, \Adam simply has to update a state $H$ in $\arena'$ which is done by computing $Up(H,\action_\Ei,\action_\Ai)$ and observing the equivalence class for $\requiv_\Ai$ relation; assuming \Eve respects $\strat$ it means that \Adam always knows what action $\action_\Ei$ she will play and therefore can compute $Up(H,\action_\Ei,\action_\Ai)$. Call $\psi$ the strategy in $\game$ mimicking $\psi'$. 

Now let $N$ be some integer and consider all those partial plays of length $N$ in $\game$ where \Eve respects $\strat$ and \Adam respects $\psi$. \reviewF{1}{Group allÃ¢ÂÂ¦ reword this}\answer{We reworded that.} 
%Group all $\requiv_\Ai$-equivalent such partial play: then for every class consider the set $H$ of possible last states. 
\oschanged{Consider the $\requiv_\Ai$-equivalent classes among these partial plays and for every class consider the set $H$ of possible last states.} 
Then those such $H$ are exactly those states that can be reached in $\game'$ in a partial play of length $N$ when \Eve respects $\strat$ and \Adam respects $\psi'$. As $\strat$ is positively winning in $\game$ , thanks to Proposition~\ref{proposition:positiveWinningReachability} there is some $N$ such that \Eve positively wins within the $N$ first moves and therefore for the same $N$ we conclude that \Eve positively wins within the $N$ first moves in $\game'$ using $\strat$ against $\psi'$. As this property does not depend on $\psi'$ we conclude that $\strat$ is positively winning in $\game'$.

Conversely, assume she has a positively winning $\trans$-compatible strategy in $\game'$. Now use $\strat$ in $\game$: obviously it is still $\trans$-compatible and we only have to prove that it is positively winning. By contradiction, assume \Adam has a strategy $\psi$ that ensures, provided \Eve uses $\strat$ in $\game$, that no final state is reached. Then, from $\psi$ one can define a strategy in $\psi'$ that consists in a partial play $H_0(\action_\Ei^0,\action_\Ai^0)H_1(\action_\Ei^1,\action_\Ai^1)\cdots H_k$ to play action $\psi([s_0]_{\requiv_\Ai}\cdots [s_k]_{\requiv_\Ai})$ where $s_i$ is any (they are all $\requiv_\Ai$-equivalent) element in $H_i$ for every $i$. Using the same argument as in the direct implication relating plays in $\game$ when using strategies $(\strat,\psi)$ and plays in $\game'$ when using strategies $(\strat',\psi')$, one concludes that playing $\psi'$ against $\phi$ in $\game'$ ensures that no final state is visited hence, leading a contradiction with $\phi$ being positively winning in $\game'$.

\end{proof}

Combining Proposition~\ref{prop:moreinftoperfect} with Theorem~\ref{theo:posTcom} directly leads the following result.

\begin{theorem}\label{theo:posTcomMore}
When \Adam is more informed than \Eve, one can decide in double exponential time in $|S|$ and polynomial in $|Q|$ whether \Eve has a $\trans$-compatible strategy that is positively winning in the reachability game $(\arena,B,F)$. {If such a strategy exists, one can construct one that uses memory of size polynomial in $|Q|$ and doubly exponential in $|S|$.}
\end{theorem}

\section{Almost-Surely Winning {for B\"uchi Conditions}}\label{sec:as}

For the rest of this section fix an arena $\arena = (\states,\actions_\Evei,\actions_\Adami,\ftrans,\requiv_\Ei,\requiv_\Ai)$ and a set of final states $F\subseteq S$. {We are interested in almost-sure winning strategies, and} we focus on  B\"uchi conditions, 
as a solution for this case permits to obtain a solution for reachability condition by a simple reduction (change the arena so that whenever a final state is reached then the play stays in it forever). For the moment we do not make any assumption on how \Adam is informed.

We show how to compute the  set of \defin{almost-surely winning beliefs} of \Eve, denoted $\wbelief{}$, which is the  set of subsets $B \subseteq \states$ such that $B \subseteq [s]_{\requiv_\Ei}$  for some $s \in \states$ and for which \Eve has an almost-surely winning strategy in the B\"uchi game $\game_B=(\arena,B,F)$.  \oschanged{For some $B\in \wbelief{}$ we let $[B]_{\requiv_\Ei}=[s]_{\requiv_\Ei}$ for $s\in S$ such that $B \subseteq [s]_{\requiv_\Ei}$.}

\subsection{Fixpoint Characterisation}

Lemma~\ref{theo:BASFixpoint} below states that the set $\wbelief{}$ can be expressed as the greatest fix-point of
a {(monotone)} mapping $\BeliefOperator: 2^{2^\states} \rightarrow
2^{2^\states}$ defined as follows. Let $\mathcal{B}\subseteq
2^\states$ and let $B\in\mathcal{B}$. We say that $B$ belongs to
$\BeliefOperator(\mathcal{B})$  if \Eve has a strategy in the \emph{reachability} game
$(\arena,B,F)$ which is positively winning and guarantees that her
belief always stays in $\mathcal{B}$.

\begin{lemma}\label{theo:BASFixpoint}
$\wbelief{}$ is the greatest fixpoint of $\BeliefOperator$.
\end{lemma}

\begin{proof}
%For a subset $\mathcal{B}$ of beliefs, say that an \Eve's belief $B$ is \defin{$\mathcal{B}$-good} if she has a strategy in the \emph{reachability} game $(\arena,B,F)$ which is positively winning and guarantees that her belief always stays in $\mathcal{B}$. 
\reviewF{1}{why define $\mathcal{B}$-good when you already did? Instead of saying $B$ is
 $\mathcal{B}$-good, just say $B\in \Xi(\mathcal{B})$}\answer{fixed as suggested}

We first argue that $\wbelief{}$ is a fixpoint for $\BeliefOperator$. For this we consider {any} $B\in\wbelief{}$ and prove that $B\in\Xi(\wbelief{})$. We denote by $\game_B$ the B\"uchi game $(\arena,B,F)$ and we start with a simple lemma.

\reviewF{1}{$\sigma_E=\phi([B]_{\sim_E})$Ã¢ÂÂ¦: $[B]_{\sim_E}$ has not been defined when $B\subseteq S$}\answer{fixed: added a sentence in the last paragraph before section 5.1}

\begin{lemma}\label{lemma:StayInWB}
Let $B\in\wbelief{}$. Let $\phi$ be \emph{any} strategy for \Eve that is almost-surely winning for her in $\game_B$ and let $\action_\Evei = \phi([B]_{\requiv_\Ei})$.
Then, for any $\action_\Adami\in \actions_\Adami$, for any $t$ such that $\exists s\in B$ with $\ftrans(s,\action_\Evei,\action_\Adami)(t)>0$, $\updateB{\Ei}(B,\action_\Evei,[t]_{\requiv_\Ei}) \in\wbelief{}$.
\end{lemma}

\begin{proof}
Consider some action $\action_\Adami$ and some $t$ such that $\ftrans(s,\action_\Evei,\action_\Adami)(t)>0$ and let $B'=\updateB{\Ei}(B,\action_\Evei,[t]_{\requiv_{\Evei}})$. By definition of {$\updateB{\Ei}$}, for every $t'\in B'$, there is some $s'\in B$ and some action $\action^{t'}_\Adami$ such that $\ftrans(s',\action_\Evei,\action^{t'}_\Adami)(t')>0$. Now, define the strategy $\phi'$ of \Eve by letting $\phi'(\lambda) = \phi([s]_{\requiv_\Ei}\cdot \lambda)$ for any partial play $\lambda$. We claim that 
$\phi'$ is almost-surely winning for \Eve in $\game_{t'}$  for any
$t'\in B'$, hence implying that $B'\in \wbelief{}$. By
contradiction, assume that $\phi'$ is not almost-surely winning for
some $\game_{t'}$ with $t'\in B'$ and let $\psi'$ be a
counter-strategy for \Adam in $\game_{t'}$, \emph{i.e.}
$\proba{t'}{\phi'}{\psi'}(\objective)<1$ (recall that $\objective$
denotes here the B\"uchi objective). Now, pick $s'\in B$ such that
$\ftrans(s',\action_\Evei,\action^{t'}_\Adami)(t')>0$ and define a
strategy $\psi$ of \Adam by letting $\psi(s') = \action^{t'}_\Adami$
and $\psi(s'\cdot \lambda) = \psi'(\lambda)$. Then as
$\proba{t'}{\phi'}{\psi'}(\objective)<1$ one also has that
$\proba{s'}{\phi}{\psi}(\objective)<1$ which leads to a contradiction.
\end{proof}

Fix a strategy $\phi_B$ as in Lemma \ref{lemma:StayInWB}: a play $\lambda$ in $\game_B$ where \Eve respects $\phi_B$ is such that $\belief{B}{\Ei}(\lambda)\in\wbelief{}$. 
Moreover, as $\phi_B$ is almost-surely winning for the B\"uchi game $\game_B$, it is in particular positively winning in the \emph{reachability} game $(\arena,B,F)$. Hence, using Proposition \ref{proposition:positiveWinningReachability}, one gets a bound $N_B$ and some $\epsilon_B$, meaning that the probability of a play $\lambda$ in $\game_B$ where \Eve respects $\phi_B$ to visit a final state within its first $N_B$ moves is $\geq\epsilon_B$. Hence, $B\in\Xi(\wbelief{})$, implying that $\wbelief{}$ is a fixpoint for $\BeliefOperator$.

Now we show that any fixpoint of $\BeliefOperator$ is included in $\wbelief{}$. For this assume that ${\BeliefOperator}(\mathcal{B}) = \mathcal{B}$ for some $\mathcal{B}$. As any $B\in \mathcal{B}$ is such that $B\in\Xi(\mathcal{B})$  it comes with some $\phi_B$, $N_B$ and $\epsilon_B$. We let $N = \max{\{N_B\mid B\in \mathcal{B}\}}$ and $\epsilon = \min{\{\epsilon_B\mid B\in \mathcal{B}\}}$. 

Now we define a strategy $\phi$ that consists in playing in rounds of length $N$: at the beginning of some round, \Eve considers her current belief $H$ and plays according to $\phi_H$ in the next $N$ moves; then she restarts with the updated belief, and so on forever. 

Now consider some $B\in \mathcal{B}$. We claim that $\phi$ is
almost-surely winning for \Eve in any in $\game_B$. Indeed, from the
properties of the $\phi_{{H}}$, it follows that any play in $\game_B$ where \Eve respects $\phi$ is such that the belief is in $\mathcal{B}$. Now, as the $\phi_{{H}}$ ensure to visit a final state with probability $\geq \epsilon$ in less than $N$ moves the Borel-Cantelli Lemma implies that $\phi$ is almost-surely winning. Hence, $B\in \wbelief{}$ and this concludes the proof.
\end{proof}

\subsection{Decidability Issues}

As $\BeliefOperator$ is monotone for set inclusion, it suffices to
compute $\wbelief{}$ by successive applications (starting with the
set of all subsets) of the operator $\BeliefOperator$ until
reaching the fixpoint. Since $\wbelief{}\subseteq {2^\states}$, the
fixpoint is reached in at most $2^{|\states|}$ steps. 

Now, as noted in Remark~\ref{rk:knowtrans} the property for a strategy to guarantee that \Eve's belief remains in a set $\mathcal{B}$ can be expressed as the strategy being $\trans_{\mathcal{B}}$-compatible (and the number of states of $\trans_{\mathcal{B}}$ is at most exponential in $|\states|$). Therefore, thanks to Theorem~\ref{theo:posTcom} (\resp Theorem~\ref{theo:posTcomMore}) every step in the fixpoint computation can be achieved in time exponential (\resp doubly exponential) in $|S|$ if \Adam is perfectly informed (\resp more informed than \Eve). This leads the following result.

\begin{theorem}\label{thm:as_main}
Let $\game$ be a B\"uchi ({or} reachability) game with $n$ states. %Then the following holds.
\begin{itemize}
\item If \Adam is perfectly informed, one can decide whether \Eve has an almost-surely winning strategy in time exponential in $n$. If such a strategy exists, it can be effectively constructed and requires memory at most exponential in $n$.
\item If \Adam is more informed than \Eve, one can decide whether \Eve has an almost-surely winning strategy in time doubly exponential in $n$. If such a strategy exists, it can be effectively constructed and requires memory at most doubly exponential in $n$.
\end{itemize}
\end{theorem}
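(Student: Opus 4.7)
The plan is to combine the fixpoint characterisation of Theorem~\ref{theo:KASFixpoint} with the algorithmic results on $\trans$-compatible positive winning in reachability games (Theorems~\ref{theo:posTcom} and \ref{theo:posTcomMore}), and, for the reachability case, to observe that it reduces to B\"uchi by making final states absorbing. So the bulk of the argument is an analysis of iterating $\KnowledgeOperator$.

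First I would compute $\wknowledge{}$ by the descending iteration $\mathcal{K}_0 = \{K \subseteq [s]_{\requiv_\Ei} \mid s \in \states\}$ and $\mathcal{K}_{i+1} = \KnowledgeOperator(\mathcal{K}_i)$. Since $\KnowledgeOperator$ is monotone on the finite lattice $2^{2^\states}$ and each strict step removes at least one element, the chain stabilises in at most $2^{|\states|}$ steps to the greatest fixpoint, which by Theorem~\ref{theo:KASFixpoint} equals $\wknowledge{}$. The key step is the evaluation of $\KnowledgeOperator(\mathcal{K})$: for each candidate $K \in \mathcal{K}$, we must decide whether \Eve has a positively winning strategy in the reachability game $(\arena,K,F)$ that keeps her knowledge inside $\mathcal{K}$. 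By Remark~\ref{rk:knowtrans}, this constraint is exactly $\trans_{\mathcal{K}}$-compatibility for an automaton $\trans_{\mathcal{K}}$ with at most $|\mathcal{K}|+1 \leq 2^{|\states|}+1$ states, so the test reduces to Theorem~\ref{theo:posTcom} (when \Adam is perfectly informed) or to Theorem~\ref{theo:posTcomMore} (when \Adam is only more informed than \Eve).

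The complexity analysis then composes: there are at most $2^{|\states|}$ iterations, each processing at most $2^{|\states|}$ subsets $K$, and each individual positive-reachability test runs in time polynomial in $2^{|\states|}$ (resp.\ $2^{2^{|\states|}}$) and in the size $2^{|\states|}$ of $\trans_{\mathcal{K}}$. This gives an overall exponential (resp.\ doubly exponential) procedure in $|\states|$. For the construction of an almost-surely winning strategy I would reuse the round-based strategy built in the proof of Theorem~\ref{theo:KASFixpoint}: at the beginning of each round, \Eve reads off her current knowledge $H \in \wknowledge{}$ and plays for $N$ steps according to a $\trans_{\wknowledge{}}$-compatible positively winning reachability strategy $\phi_H$ supplied by Theorem~\ref{theo:posTcom} or Theorem~\ref{theo:posTcomMore}. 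The memory is the product of the current knowledge (at most exponentially many values), the memory of $\phi_H$, and a round-step counter up to $N$; the bounds on $N$ from Corollary~\ref{cor:boundingN} and the corresponding strategy-size bounds from Theorems~\ref{theo:posTcom} and \ref{theo:posTcomMore} then yield exponential (resp.\ doubly exponential) memory. The reachability case is obtained by the standard reduction of making $F$-states absorbing, turning reachability of $F$ into infinitely often visiting $F$.

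The main obstacle is really only accounting: all conceptual work is packaged in the preceding theorems, so care is needed in verifying that the $2^{|\states|}$ outer iterations times the per-iteration cost yields the advertised single-exponential (resp.\ double-exponential) bound, and that the finite-memory strategy extracted from the round construction can indeed be described by an automaton of the claimed size.
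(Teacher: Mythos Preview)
Your proposal is correct and follows essentially the same approach as the paper: iterate $\KnowledgeOperator$ to the greatest fixpoint (at most $2^{|\states|}$ steps), express the knowledge constraint via $\trans_{\mathcal{K}}$ (Remark~\ref{rk:knowtrans}), and invoke Theorems~\ref{theo:posTcom}/\ref{theo:posTcomMore} for each step, then extract the round-based strategy from the proof of Theorem~\ref{theo:KASFixpoint} with $N$ bounded via Corollary~\ref{cor:boundingN}. The only point to make explicit is that Corollary~\ref{cor:boundingN} is stated for the perfectly-informed case, so in the more-informed case you first pass through Proposition~\ref{prop:moreinftoperfect} to an arena of size $2^{|\states|}$ before applying it, yielding the doubly exponential bound on $N$.
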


\begin{proof}
Decidability follows from Theorem~\ref{theo:posTcom}/Theorem~\ref{theo:posTcomMore} and the fixpoint characterisation given in Lemma~\ref{theo:BASFixpoint}. The result on the strategies is also a consequence of Theorem~\ref{theo:posTcom}/Theorem~\ref{theo:posTcomMore} combined with Corollary~\ref{cor:boundingN} which permits to bound the size of $N$ in the proof of Lemma~\ref{theo:BASFixpoint}.
 \end{proof}

%++++++++++++++++++++++++++++++++++++++%
%+++++++++++++++ LOWER BOUND ++++++++++%
%++++++++++++++++++++++++++++++++++++++%

\section{Lower Bound}

\reviewF{1}{this paragraph is a little bit hard to read, it can benefit from some rewording}\reviewF{2}{the crucial sentence describing the difference in the coding is very long and difficult to parse, moreover "Adam's was" should be corrected. Please reformulate.}\answer{We hope it is better now}

We now give a matching lower bound to the upper bounds in Theorem~\ref{theo:posTcomMore} and in Theorem~\ref{thm:as_main} for the case where \Adam is more informed than \Eve. Note that in the case where \Adam is perfectly informed one can get a matching lower bound (ExpTime-hardness) as in the case where randomised strategies are allowed~\cite{CDHR07}.  {Also note that in the case where \Adam is more informed than \Eve similar lower bounds, when randomised strategies are allowed, were obtained in~\cite{GS09,BGG09,BertrandGG17} for almost-sure winning\footnote{Actually the lower bound in~\cite{GS09} uses a game where none of the player is more informed than the other but it is easily seen how to modify it to obtain a game where \Adam is more informed than \Eve}; however, even if the ideas of the proof below are similar to the ones in~\cite{GS09,BGG09,BertrandGG17}, namely the players simulate a run of an exponential space alternating Turing machine while gadgets prevent cheating, there is a slight but crucial difference. Indeed, \Eve is in charge of describing the successive configuration while in previous proofs \Adam was; we actually believe that this is needed for the proof to work (mainly because we are not only interested in almost-sure winning but also in positively winning).}

\begin{theorem}\label{theo:lowerbound}
Deciding whether \Eve has a positively winning (\resp an almost-surely winning) strategy in a reachability game where \Adam is more informed than her is a 2-ExpTime-hard problem.
\end{theorem}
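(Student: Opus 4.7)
The plan is to reduce from the problem of deciding the winner of a \emph{deterministic} (i.e., non-stochastic) two-player imperfect-information reachability game in which \Adam is more informed than \Eve. This problem is known to be 2-ExpTime-complete: the lower bound reflects a combined subset-construction blow-up (one exponential for \Eve's knowledge, a second for \Adam's refined observation), and it can be established by encoding an alternating Turing machine working in space $2^n$.

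The deterministic game is a special case of our stochastic model: take $\delta(s,\sigma_\Ei,\sigma_\Ai)$ to be a Dirac distribution concentrated on the unique successor. In such a game any pair $(\phi,\psi)$ of strategies induces a unique play $\play$, so $\proba{s_0}{\phi}{\psi}$ is concentrated on $\{\play\}$ and therefore $\proba{s_0}{\phi}{\psi}(\objective)\in\{0,1\}$. Hence the three winning modes (sure, positive, almost-sure) coincide on deterministic arenas, so the 2-ExpTime lower bound for the sure-winning problem transfers verbatim to the positive-reachability and almost-sure-B\"uchi problems, matching the upper bounds of Theorem~\ref{theo:posTcomMore} and Theorem~\ref{thm:as_main}.

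For a self-contained hardness argument, I would encode an alternating Turing machine $M$ running in space $2^n$ on input $x$ of size $n$ as follows. \Eve simulates the computation step by step, announcing at each round the control state, head position, and the symbol written. \Adam secretly maintains an $n$-bit \emph{audited address} pointing at some cell of the exponential tape, updated deterministically in function of \Eve's announcements; his observation includes this address and its trajectory, whereas \Eve only sees the head-local data. The resulting equivalence relations satisfy $\requiv_\Ai \subseteq \requiv_\Ei$, so \Adam is more informed than \Eve. At any point \Adam may fire a ``challenge'' action that leads to a verification gadget testing whether the symbol \Eve claimed for the audited cell on its last rewrite agrees with her current claim; an inconsistency sends the play to a non-final sink, while consistency (and eventually reaching an accepting configuration of $M$) reaches $F$. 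The alternation of $M$ is mirrored by alternating \Eve-controlled and \Adam-controlled branching states in the usual way. Because \Eve cannot tell which address is being watched, a winning strategy must keep the entire $2^n$-cell tape globally consistent, yielding a doubly-exponential strategic cost.

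The main obstacle is the design of the audit gadget: \Adam must be able to force a loss on any inconsistency in the simulated tape, yet an \Eve playing a faithful simulation of an accepting computation must always win regardless of \Adam's secret address. Standard care is needed so that (i) \Adam's pointer is coupled with the simulated head via the transition function so that ``the last write to the audited cell'' is well-defined and observable at the challenge, (ii) \Adam cannot falsely accuse a correct \Eve, and (iii) \Adam's nondeterminism over the audited address is truly private (encoded via $\requiv_\Ei$ equivalence classes that are split by $\requiv_\Ai$). Once correctness of the simulation is checked, \Eve wins the constructed game iff $M$ accepts $x$, establishing 2-ExpTime-hardness for both positive reachability and almost-sure B\"uchi.
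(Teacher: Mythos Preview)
Your reduction to purely deterministic games has a genuine gap: the arena you build is not polynomial in the input size. You propose that the game state carry Adam's secret ``$n$-bit audited address'' (you say explicitly that ``his observation includes this address''), but an $n$-bit address takes $2^n$ distinct values, so your arena has at least $2^n$ states for a Turing-machine input of length $n$. An exponential-size reduction from an $\textsc{AExpSpace}$($=$\textsc{2-ExpTime}) problem yields only \textsc{ExpTime}-hardness. If instead you keep the address only in Adam's strategy and not in the state, the transition function cannot compare Eve's currently announced cell index with the audited one, so the verification gadget cannot be realised. Either way the construction fails. Your opening appeal to a ``known'' \textsc{2-ExpTime} lower bound for \emph{deterministic} games with Adam more informed is not substantiated by a reference, and your own sketch is precisely where the difficulty bites.

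The paper's proof resolves exactly this size obstacle by exploiting the stochastic transitions. When Adam marks a cell in one configuration and points at the ``same'' cell in the next, the arena cannot afford to store the full $n$-bit index to check him; instead nature stores a single \emph{random} bit position $j\in\{1,\dots,n\}$ together with the $j$-th bit of the index---only $O(n)$ extra state values---and hides both from \emph{both} players. If Adam points at a different cell, the stored bit disagrees with probability at least $1/n$, which immediately sends the play to a final state; by Borel--Cantelli an Adam who falsely accuses infinitely often is almost surely caught, so a faithful \Eve wins almost surely (hence positively). The randomness is thus not incidental: it lets the referee verify equality of two $n$-bit indices with a polynomial-size arena, a check that cannot be delegated to either player without enabling cheating. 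Without this device the \textsc{2-ExpTime} lower bound does not follow along the lines you describe.
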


\begin{proof}
The idea is to simulate a computation of an alternating Turing machine that uses a space of exponential size and to reduce termination to almost-surely winning for \Eve. As alternating Turing machines of exponential space are equivalent to deterministic Turing machines working in doubly exponential time it permits to obtain the desired lower bound. \oschanged{We can safely assume that the tape alphabet $A$ contains a blank symbol as well as a special symbol $\sharp$, and that initially the input tape is made of $n$ successive $\sharp$ symbols followed by $2^n-n$ blank symbols.}\reviewF{1}{what is the
 meaning of the word distinguished here? Different from the blank symbol?}\answer{It should be clear now} A configuration of the machine can be described by a word of length $2^n$ in $A^*QA^*$ where  $Q$ is the set of states of the machine (including some final states): the meaning of a configuration $a_1\cdots a_{\ell} q a_{\ell+1}\cdots a_{2^n}$ is that the tape content is $a_1\cdots a_{\ell} a_{2^n}$, the state is $q$ and the reading/writing head is on the $\ell$-th cell. A run of the machine is a sequence of successive configurations separated by transitions of the machine; it is accepting if it contains a final configuration (and in that case the run is of finite length; otherwise it is of infinite length).

A classical way of thinking of an alternating Turing machine is as a game where \Eve is in charge of the choice of transitions when the machine is in an existential state while \Adam takes care of the universal states. The machine accepts if and only if \Eve has a winning strategy to eventually reach a configuration with a final control state.

Consider now the following (informal) game. \Eve is in charge of describing the run of the Turing machine (her actions' alphabet contains all the necessary symbols for that \ie $A\cup Q$ that permits the game to go in some associated states). After she described a configuration either she (in case the state is existential) or \Adam (in case the state is universal) describes a valid transition of the machine (again by playing some special actions), and then \Eve describes the successive configuration and so on until possibly a final configuration is reached (in which case she wins the game). \oschanged{Hence, in the game's state one stores the state (of the Turing machine) when described as well as the adjacent symbols; this information is used when \Eve/\Adam has to describe the next transition of the machine (that should be consistent).}

Of course the problem is that \Eve could cheat by not describing a valid run. For this, \Adam can, in every configuration, secretly (\ie \Eve does not observe it) mark a cell of the tape, and in the next configuration he can indicate a cell (supposedly of same index than the previously marked one) and it is checked whether it has been wrongly updated: this is easily done as the cell before and after the marked cell have been stored in the arena (and \Eve does not observe it of course) and together with the transition one can compute the correct update of the cell. Now in case there is indeed a wrong update of the cell content, the play restarts (\ie the players restart from the initial configuration of the Turing machine); otherwise the play goes to a final state and \Eve wins. \oschanged{Hence, in the game's state one also stores (and hides to \Eve) information of a cell marked by \Adam and of the adjacent symbols for a later check. This marking by \Adam as well as the checking later is done by him playing a distinguished action.}

One problem in the previous simulation is that \Adam could cheat by indicating two cells that are not with the same index. If the space used by the machine was of linear size, one could of course store the actual index and formally check it. Here, we use an extra coding to circumvent this problem. When describing the configuration, after every symbol \Eve produces a sequence of $n$ bits whose meaning is to describe, in binary counting, the index of the last symbol. When she describes such a binary number, \Adam can secretly mark a bit that he claims will be not correctly updated when describing the index of the next symbol (for this he just plays an action that stands for a number between $1$ and $n$) and this is checked next: if she made an incorrect update, the play restarts (\ie the players restart from the initial configuration of the Turing machine); otherwise the play goes to a final state where she wins. \oschanged{Hence, in the game's state one also stores (and hides to \Eve) the index (between $1$ and $n$) of a bit marked by \Adam as well as the value corresponding to the bit of the same index in the incremented version of the described number (this can be computed on the fly). This marking by \Adam is done by him playing a distinguished action; the checking is done deterministically (thanks to a counter).}
One also uses this binary encoding of the index of the cell in the following way: whenever \Adam marks a symbol that he claims will be incorrectly updated in the next configuration, a bit of its binary encoding is guessed (\ie randomly chosen) and its index is stored and not observed by none of the players. Later, when \Adam indicates the supposed corresponding symbol in the next configuration, the guessed bit is checked and should match: if not the play goes to a final state and \Eve wins; otherwise one does as previously explained (\ie one checks whether the symbol is correct: if not the play restarts otherwise the play goes to a final state and \Eve wins). \oschanged{Hence, in the game's state one also stores (and hides to both players) the value and index of the randomly chosen bit.}

\reviewF{1}{Proof of Theorem 6.1: while it is standard for such proofs to be more informal, here it makes it a little hard to
 follow. The reader would benefit greatly e.g. from knowing what the states of the game are, perhaps not explicitly but
 at least in more detail. }\answer{We tried in the above description to clarify things but, as the reviewer does, we believe that giving an explicit description would be very technicalÃ¢ÂÂ¦}

\reviewF{1}{Next two paragraphs: add a few commas}\answer{We reworked it}

We claim that \Eve positively wins (equivalently almost-surely wins) this game if and only if the Turing machine accepts. Once this is established, the proof will be over, as one can easily notice that the previous informal game can be encoded formally as a two-player game with imperfect information of polynomial size in the one of the Turing machine.

First, assume that the Turing machine accepts. Hence, it means that the existential player \Eve has a winning strategy in the acceptance game of the machine. Now, mimic this strategy in the above described game: \Eve always makes a correct description of a run and, when she has to choose a transition of the machine, she does as in her winning strategy in the acceptance game of the machine. We claim that this strategy is almost-surely winning (hence, also positively winning). Indeed, any strategy of \Adam that does not infinitely often claim that a cell is incorrectly updated is surely losing for him because either he makes a wrong claim (actually his claims are always wrong but here we mean he get discovered because of the hidden bit), or after some point the simulation goes to the end and finishes by a final configuration of the Turing machine. Now, against this strategy of \Eve, when \Adam infinitely often claims that a cell is incorrectly updated, he almost-surely gets caught because at every claim there is a (fixed positive) probability (at least $1/n$) that the secret bit does not match, and therefore, by Borel-Cantelli Lemma, the probability that he gets caught eventually is $1$. Of course, if \Adam claims at some point that a bit is incorrectly updated by \Eve he also looses (because she describes a valid run). Hence, \Eve's strategy almost-surely defeats any strategy of \Adam.

Conversely, assume that the Turing machine does not accept. Hence, it means that the existential player \Eve has no winning strategy in the acceptance game of the machine. 
Now, consider a strategy of \Eve.
There are two possibilities. 
\begin{itemize}
\item Either there is a strategy\footnote{In fact a set of indistinguishable strategies from \Eve's point of view, including the ones where \Adam claims she cheats.} of \Adam against which \Eve's strategy eventually cheats. Then, consider the strategy of \Adam that plays the same except that he points the moment where she cheats: then, \Eve must behave the same and therefore the play restarts. Now, consider how she behaves in the restarted play and do the same reasoning. If we are always in the same situation, by iteratively playing a strategy pointing where she cheats in every simulation of the Turing machine ensures that no final configuration is reached and therefore that she surely looses. 
\item Or, against any strategy of \Adam, \Eve's strategy never cheats (\ie describes a valid run). Hence, \Eve's strategy can be seen as a strategy in the acceptance game of the machine and therefore, one can consider the strategy of the universal player that beats it in the acceptance game and let \Adam mimic it in the simulation game (and he never claims that she cheats). Then, this strategy leads to an infinite play that corresponds to the description of an infinite run of the alternating Turing machine that never visits a final configuration: hence, it surely defeats \Eve's strategy
\end{itemize}

In conclusion, for any strategy of \Eve in the above described game there is a strategy of \Adam that surely beats this strategy, which implies that there is no positively winning (hence almost-surely winning) strategy for \Eve. 
This terminates the proof.
\end{proof}

%++++++++++++++++++++++++++++++++++++++%
%+++++++++++++++ SUMMARY +++++++++++++%
%++++++++++++++++++++++++++++++++++++++%

\section{Summary}

\reviewF{2}{I would like to suggest to put a small description of the model also
in the conclusions, so that they can be used for a quick reference to
the results without the need to read the whole introduction.
}\answer{We added the next two paragraphs}

\oschanged{
In this paper we considered finite state games in which, at each round,
the two players (called \Eve and \Adam) choose concurrently an action and based on these
actions the successor state is chosen according to some fixed
probability distribution. We considered several classical winning conditions: safety, reachability, B\"uchi and co-B\"uchi.  Moreover, the players are imperfectly informed: each player has an equivalence relation over states and, instead of observing the exact state, he observes its equivalence class. 
Finally, we restricted our attention to pure strategies, \emph{i.e.} we forbid the players to randomise when choosing their actions.  

We studied the decidability and complexity status of the problem of deciding whether \Eve has a positively (\resp almost-surely) winning strategy. To obtain positive results, we imposed restrictions on how \Adam is informed: we considered the case where he has perfect information and the case where he is more informed than \Eve.
}

The landscape of decidability and undecidability results with pointers
to the literature and to the results in our paper are shown in
the Table~\ref{table:summary}. The entries of the form
``1/2-Exptime-complete''\ refer to the two cases of \Adam being perfectly
informed and being better informed than \Eve, respectively (the
results from~\cite{CD} are for the case of \Adam being perfectly
informed). The implication $\Rightarrow$ means that our result is an
easy consequence of a result from the literature. The undecidability
results already hold for the case in which \Adam is perfectly
informed.

\begin{table}[htb]
\begin{center}
{\footnotesize
\begin{tabular}{|c||c|c|c|c|}\hline
             & Safety             & Reachability             & B\"uchi                   & co-B\"uchi                        \\ \hline \hline
 Positively  & Undecidable        & 1/2-Exptime-complete          & Undecidable               & Undecidable                       \\ 
             &Th.~\ref{theo:undecidable-positive-safety} & ~\cite{CD}, Th.~\ref{theo:semiGame}/Th.~\ref{theo:posTcomMore}  + Th.~\ref{theo:lowerbound}                          &   ~\cite{BBG08} $\Rightarrow$ Th.~\ref{Th:undecidabilityPOMDP}                       & Th.~\ref{theo:undecidable-positive-safety}  \\\hline
 Almost & ExpTime-comp.   & 1/2-Exptime-complete          & 1/2-Exptime-complete          & Undecidable                       \\
       Sure &~\cite{Berwanger08} &~\cite{CD}, Th.~\ref{thm:as_main} + Th.~\ref{theo:lowerbound} & Th.~\ref{thm:as_main} + Th.~\ref{theo:lowerbound} &~\cite{BBG08} $\Rightarrow$ Th.~\ref{Th:undecidabilityPOMDP}  \\ \hline
\end{tabular}}
 \caption{Landscape of decidability and undecidability results}
 \label{table:summary}
 \end{center}
 \end{table}

%++++++++++++++++++++++++++++++++++++++%
%+++++++++++++++ BIBLIOGRAPHY ++++++++++%
%++++++++++++++++++++++++++++++++++++++%

%\newpage
%\bibliographystyle{plain}\bibliography{Abbrevs,Imperfect}

\newcommand{\noopsort}[1]{} \newcommand{\singleletter}[1]{#1}
  \newcommand{\etal}{et al.}

\end{document}